\documentclass[11pt]{llncs}
\pagestyle{plain}
 \pagenumbering{arabic}
\usepackage{qip}
\usepackage{framed,xspace}
\usepackage{fullpage}
\usepackage{graphicx}
\newcommand{\rootprob}{\frac{1}{\sqrt{|\mathcal{R}|}}}
\newcommand{\prob}{\frac{1}{{|\mathcal{R}|}}}

\begin{document}
\title{Superposition Attacks on Cryptographic Protocols}
\author{Ivan Damg{\aa}rd, Jakob Funder, Jesper Buus Nielsen, Louis Salvail}
\institute{Dept. of Computer Science, Aarhus University, Universit\'{e} de Montreal}
\date{}
\maketitle
\begin{abstract}
Attacks on classical cryptographic protocols are usually modeled by allowing an adversary to ask queries from an oracle. Security is then defined by requiring that as long as the queries satisfy some constraint, there is some problem the adversary cannot solve, such as compute a certain piece of information.
In this paper, we introduce a fundamentally new model of quantum attacks on classical cryptographic protocols, where the adversary is allowed to ask several classical queries in quantum superposition. This is a strictly stronger attack than the standard one, and we consider the security of several primitives in this model. We show that a secret-sharing scheme that is secure with threshold $t$ in the standard model is secure against superposition attacks if and only if the threshold is lowered to $t/2$. We use this result to give zero-knowledge proofs for all of NP in the common reference string model. While our protocol is classical, it is sound against a cheating unbounded quantum prover and computational zero-knowledge even if the verifier is allowed a superposition attack. Finally, we consider multiparty computation and show that for the most general type of attack, simulation based security is not possible. However, putting a natural constraint on the adversary, we show a non-trivial example of a protocol that can indeed be simulated.
\end{abstract}

\section{Introduction}
Attacks on classical cryptographic protocols are usually modeled by allowing an adversary to ask queries from an oracle, for instance the adversary specifies a subset of parties he wants to corrupt, and gets back their views of the protocol. Security is then defined by requiring that as long as the queries satisfy some constraint (for instance, the corrupted subset is not too large), there is some problem the adversary cannot solve, such as compute a certain piece of information.

Several previous works consider what happens to security if we allow the adversary to be quantum. The model usually considered is that the adversary is now a quantum machine, but otherwise plays exactly the same game as in a classical attack, i.e., he still communicates classically with the protocol he attacks. One example of this is the work of Watrous, showing that a large class of zero-knowledge protocols are also zero-knowledge against a quantum verifier.

It is natural to ask why we constrain a quantum adversary to communicate classically during the attack? The standard answer to this is that since honest players are classical, they would (implicitly) be doing a measurement of anything they receive, thus forcing a collapse of any quantum state they are given. 

An important point, however, is that the assumption about honest players being classical is not always justified, {\em even if the protocol is supposed to be classical}: in the future, honest players may use quantum computing, just to speed up their local computation, even if they sometimes communicate classically. Furthermore, future usage of quantum cryptography will imply that players sometimes
communicate quantumly (to do quantum key distribution) and sometimes classically. Finally, one should consider the case where a classical protocol is used as a subrutine for a protocol that handles quantum data. This is exactly what happens in the work of Ben-Or et al.~\cite{BCGHS06}, where classical multiparty computation is used as a tool to obtain quantum multiparty computation. 

Now, if a quantum adversary is attacking honest players that use quantum computing or even quantum communication themselves, 
it does not seem justified to assume that he can only communicate classically with them. Indeed, as an example, consider a zero-knowledge protocol where the prover is implemented as a small quantum device sitting inside a mobile unit, say a PDA or a smart-phone. If an adversary gets hold of the unit, he may not be able to break in and directly read the prover's secret. But he can try to subject the device to unusual physical conditions, say by cooling it down and in this way perhaps be able to communicate quantumly with the prover, even if the device was not designed for this in the first place.

In this paper, we therefore introduce a new model of quantum attacks on classical cryptographic protocols, where the adversary is allowed to ask several classical queries in quantum superposition. 
In more concrete terms, we ask, for multiparty protocols: what happens if the adversary can be in superposition of having corrupted several different subsets? or, for zero-knowledge protocols: what happens if a quantum verifier can be in superposition of having issued several different challenges to the prover? As we argued above, we believe such superposition attacks to be a valid physical concern, but they also form a very natural generalization from a theory point of view: in the literature on black-box quantum computing, quantum black-box access to a function is usually defined by extending classical black-box access such that queries are allowed to contain several inputs in superposition.
Our superposition attacks extend classical attacks in the same way.

Superposition attacks are strictly stronger than the standard one, and we consider the security of several primitives in this model:
We show that a secret-sharing scheme that is perfectly secure with threshold $t$ in the standard model is perfectly secure against superposition attacks if and only if the adversary's superposition is constrained to contain subsets of size at most $t/2$. If this condition is not satisfied, not only does perfect security fail, we show examples where the adversary may even learn the secret with certainty.

We use the secret-sharing result to construct zero-knowledge proofs for all of NP in the common reference string (CRS) model. While our protocol is classical, it is sound against a cheating unbounded quantum prover and computational zero-knowledge even if the verifier is allowed a superposition attack.
Since we use the CRS model, the reader may ask why we do not use existing protocols for non-interactive zero-knowledge (NIZK), where the prover just sends a single message to the verifier. In this way, the adversary would not get a chance to do a superposition attack. However, the most general assumption under which NIZK is known to be possible is existence of one-way permutations. They in turn are only known to be realizable under assumptions that are easily broken by a quantum adversary, such as factoring or discrete log. Therefore we do not consider NIZK a satisfactory solution. 

Finally, we consider multiparty computation and we define a UC-style model for static and passive superposition attacks on classical MPC protocols. Given our result on secret-sharing schemes, it is natural to speculate that classical MPC protocols that are secure against $t$ corruptions, are secure against superposition attacks corrupting $t/2$ players. The situation turns out to be more complicated, however:
We show that for the model that gives the adversary the most power (and hence is the most hostile to the simulator), simulation based security is not possible at all. The adversary can put its query in a state that prevents the simulator from learning any information on the inputs and outputs of corrupted players. 
However, putting a natural constraint on the adversary, we show a non-trivial example of a protocol that can indeed be simulated. By non-trivial, we mean that
although the protocol is secure against a classical attack, we can show that it cannot be proved
secure against a superposition attack by simply running the classical simulator in superposition. We therefore come up with techniques that are  ``more quantum'' to do the simulation. We give a (in completely classical terms) a characterization of the protocols that can be simulated using these techniques. The obtained simulators are not necessarily efficient, however.

Whether more general positive results hold in this constrained model remains an open question.
Likewise, the very natural question of security of {\em quantum} protocols against superposition 
attacks remains open. Note that in existing work on quantum multiparty computation~\cite{BCGHS06}, the adversary's choice of subset to corrupt is classical. The negative part of our result on secret sharing described above shows that the protocol from~\cite{BCGHS06} is not secure against superpositions attacks
as it stands.

\section{Preliminaries}
\subsection{Notation and terminology}
We will model players in protocols in two different ways: when we are not interested in computational  limitations on parties, a player will be specified by a series of unitary transforms where the $i$'th transform is done on all qubits available to the party, after the i'th message has been received (in the form of a quantum register), and then some designated part of the storage is sent as the next outgoing message. We are limiting ourselves to perfect unitary transformation of the party's register because we are exactly considering the situation where an attacker manages to prevent coupling between the party and the environment. 

In cases where we want to bound the computational complexity of a player, we consider
a players to be an infinite family of interactive quantum circuits, as in the model from \cite{FS09}, and then the complexity is circuit size.

\subsection{Running functions in superposition}

Consider any function, $f: X \rightarrow Y$ and a register of qubits, $\ket\psi = \displaystyle\sum_x \alpha_x\ket{x}\ket{0} \in \mathcal{H}_{X}\otimes \mathcal{H}_Y$, where $dim(\mathcal{H}_{X}) = |X|$ and $dim(\mathcal{H}_{Y}) = |Y|$. \em Running \em $f$ on $\ket\psi$ means to apply the unitary transformation, $U_f$, such that $ U_f \displaystyle\sum_x \alpha_x\ket{x}\ket{0} = \displaystyle\sum_x \alpha_x\ket{x}\ket{f(x)}$. In general the register in $\mathcal{H}_{Y}$, called the \em response register\em, can contain any superposition of values, not just 0. In this case, we have that, $ U_f \displaystyle\sum_{x,a} \alpha_{x,a}\ket{x}\ket{a} = \displaystyle\sum_x \alpha_{x,a}\ket{x}\ket{f(x) \oplus a}$ where $\oplus$ is the bitwise xor. 

\section{Secret sharing}
In (classical) secret sharing $n$ parties are sharing some secret value $s\in \mathbb S$ using randomness $r\in \mathcal{R}$, where $\mathbb S$ and $\mathcal{R}$ is the set of possible secrets and randomness. We name the parties $P_1, \ldots, P_n$. Let $[n] = \{ 1, \ldots, n \}$.  Each party, $P_i$, receives a {\em share}  $v_i(s, r) \in \{ 0,1\}^k$, also called his \em private view \em. That is, $v_i: \mathbb{S} \times \mathcal{R} \rightarrow \{0,1 \}^k$. For $A\subset [n]$, let $v_{A}(s,r) = \{ v_i(s, r) \}_{i \in A}$  be the string containing the concatenation of views for parties $P_i$ with $i \in A$. For convenience in the following we assume that each such string is padded, so that they have the same length regardless of the size of $A$. That is, $v_A: \mathbb{S} \times \mathcal{R} \rightarrow \{0,1 \}^t$. 
An {\em adversary structure} $G$ is a family of subsets $G \subset 2^[n]$. A secret sharing scheme
is perfectly secure against classical $G$-attacks if
for any $A\in G$,  the distribution of $v_{A}(s,r)$ does not depend on $s$. The adversary structure
of the secret sharing scheme is the maximal adversary structure $F$ such that the scheme is perfectly secure against $F$ attacks.

We'll model any passive attack on the scheme as an one-time query to an {\em corruption oracle}. The corruption oracle for a specific run of a secret sharing scheme $O(s,r,A) = v_A(s,r)$ is the function that on input $A$ returns the private view of those parties. That is, $O : \mathbb S \otimes \mathcal{R} \otimes F \rightarrow \{ 0,1 \}^t$. \\

\subsection{Two-party bit sharing example} \label{sec:2party_exp}
Before we give the full model for secret sharing we start with a small example. We consider the case of 2 parties sharing a single bit, $b\in \{ 0,1\}$ using a random bit $r\in \{0,1\}$. $[n] = \{ 1, 2\}$, $F = \{ (1), (2)\}$, $v_1(b,r) = b \oplus r$, $v_2(b,r) = r$. This scheme is trivially secure in the classical setting. In the follow we'll consider what happens if we allow the adversary to interact with the corruption oracle in superposition. As this is meant to introduce the concept, we'll cut a few corners in terms of technicality and and reserve that for the later sections.\\
Assuming some specific bit has been shared with some specific randomness, we can write the state of the parties as $\ket{v_1(b,r)} \in \mathcal{H}_2$ and $\ket{v_2(b,r)}  \in \mathcal{H}_2$. Consider an adversary supplying the following input to the corruption oracle,
\begin{equation*}
\ket \omega = \frac{1}{\sqrt{2}}(\ket{1}\ket{0} + \ket{2}\ket{0}).
\end{equation*}
The oracle will run on both of these input in superposition. The state the adversary receives will be a mixed state over different choices of randomness and secrets. We'll assume both of these are uniformly chosen and he'll hence receive,
\begin{equation*}
\rho^{adv} =\displaystyle\sum_{b\in \{0,1\},r \in \{ 0,1 \}} \frac{1}{4} \proj{\psi^{adv}_{b,r}}
\end{equation*} 
where $\ket{\psi^{adv}_{b,r}} =\frac{1}{\sqrt{2}}(\ket{1}\ket{ v_1(b,r)} + \ket{2}\ket{v_2(b,r)}) = \frac{1}{\sqrt{2}}(\ket{1}\ket{ b\oplus r} + \ket{2}\ket{r})$.
Define the state the adversary sees for a specific secret, $b$, as $\rho^{adv}_b = \displaystyle\sum_{r\in \{0,1\}} \frac{1}{2} \proj{\psi^{adv}_{b,r}}$. We would consider our bit sharing scheme secure iff for all possible queries, $\rho^{adv}_0 = \rho^{adv}_1$. However, note that,
\begin{eqnarray*}
\rho^{adv}_b =  \displaystyle\sum_{r\in \{0,1\}} \frac{1}{2} \proj{\psi^{adv}_{b,r}} =\frac{1}{2}\displaystyle\sum_{A,A' \in \{1,2\}} \ket{A}\bra{A'}\otimes\left( \displaystyle\sum_{r\in \{0,1\}}  \ket{v_A(b,r)}\bra{v_{A'}(b,r)}\right).
\end{eqnarray*}
For $A = 1, A' =2$, consider the submatrix,  $ \displaystyle\sum_{r\in \{0,1\}}  \ket{v_1(b,r)}\bra{v_{2}(b,r)} =  \displaystyle\sum_{r\in \{0,1\}} \ket{b\oplus r} \bra{b}$. It should be clear that $\displaystyle\sum_{r\in \{0,1\}} \ket{0\oplus r} \bra{0} \neq \displaystyle\sum_{r\in \{0,1\}} \ket{1\oplus r} \bra{1}$ and hence $\rho^{adv}_0 \neq \rho^{adv}_1$, which means the scheme is not secure if we allow the adversary to run the corruption oracle in superposition. This is not surprising since we know that using the Deutch-Josza algorithm you can actually distinguish between two such shares perfectly. Note that to do it perfectly it would require the use of a superposition of values for the response registers.

\subsection{Model for secret sharing}
We'll now give the full technical description of the model for superposition attacks on general secret sharing. To do this we first consider the state spaces needed to run the protocol and the attack on the protocol. First is the space that contains the shares for all the parties, $\mathcal H_{parties}$. The state in the register for this space is unchanged throughout the attack and is,
\begin{equation*}
\ket{parties}_p = \displaystyle\sum_{s \in \mathbb{S},r \in \mathcal{R}} \sqrt{p_s} \sqrt{p_r} \ket{s,r, v_{[n]}(s,r)}_p = \displaystyle\sum_{s \in \mathbb{S},r \in \mathcal{R}} \sqrt{p_s} \sqrt{p_r} \ket{s,r}\displaystyle\bigotimes_i \ket{v_i(s,r)} \in \mathcal{H}_{parties}
\end{equation*}
where $\ket{s,r}$ is the purification of the secret and randomness choice. This is purely for technical reasons and does not matter for the adversary
 as he never sees it (and hence they might as well be considered measured). Next is the space for the environment, $\mathcal{H}_{env}$, which the adversary
 can use to choose his query and use as potential auxiliary register. The initial state for the environment is a general (pure) state,
\begin{equation*}
\ket{\psi}_e = \displaystyle\sum_{x}\alpha_x\ket{x}_e \in \mathcal{H}_{env}.
\end{equation*}
Finally is the space holding the adversary's query to the corruption oracle, $\mathcal{H}_{query}$. This is initially a 'blank' state, 
\begin{equation*}
\ket{\omega}_q = \ket{0,0}_q \in \mathcal{H}_{query}.
\end{equation*}
The space for the entire model is hence, $\mathcal{H}_{total} = \mathcal{H}_{parties} \otimes \mathcal{H}_{env} \otimes \mathcal{H}_{query}$ and the intial state is,
\begin{equation*}
\ket{init}_t = \displaystyle\sum_{s \in \mathbb{S},r \in \mathcal{R}} \sqrt{p_s} \sqrt{p_r}\ket{s,r, v_{[n]}(s,r)}_p \otimes \displaystyle\sum_{x}\alpha_x\ket{x}_e\otimes\ket{0,0}_q \in \mathcal{H}_{total}.
\end{equation*}
The attack will be defined by two operations and an adversary structure $F$. First the adversary needs to construct his query for the oracle. This includes choosing the superposition of subsets he'll corrupt and associated values for the response registers. This is an arbitrary unitary operation. We'll denote it, $U^{adv,F}_{query}$,
\begin{equation*}
U^{adv,F}_{query} :  \mathcal{H}_{env} \otimes \mathcal{H}_{query} \rightarrow \mathcal{H}_{env} \otimes \mathcal{H}_{query}.
\end{equation*}
After this unitary operation the state is,
\begin{equation*}
\ket{query}_t = U^{adv,F}_{query} \ket{init}_t = \displaystyle\sum_{s \in \mathbb{S},r \in \mathcal{R}} \sqrt{p_s} \sqrt{p_r}\ket{s,r, v_{[n]}(s,r)}_p \otimes \displaystyle\sum_{x,A\in F,a \in \{ 0,1 \}^t}\alpha_{x,A,a}\ket{x}_e\otimes\ket{A,a}_q \in \mathcal{H}_{total}
\end{equation*}
where we assume it is the identity on $\mathcal{H}_{parties}$
Next the oracle, $O(s,r,A)$, is run. Let $U_{O}$ denote the unitary applying this function. The state afterwards is,
\begin{eqnarray*}
\ket{final}_t = U_{O} \ket{query}_t = \\ \displaystyle\sum_{s \in \mathbb{S},r \in \mathcal{R}} \sqrt{p_s} \sqrt{p_r}\ket{s,r, v_{[n]}(s,r)}_p \otimes \displaystyle\sum_{x,A\in F,a \in \{ 0,1 \}^t}\alpha_{x,A,a}\ket{x}_e\otimes\ket{A,a + v_A(s,r)}_q \in \mathcal{H}_{total}
\end{eqnarray*}
were we, again, assume $ U_{O}$ is padded with appropriate identities. 
Consider the final state the adversary sees for a specific secret, $s$, 
\begin{equation*}
\rho^{adv,F}_s = \displaystyle\sum_{r \in \mathcal{R}} \proj{\psi^{adv,F}_r}
\end{equation*}
where $\ket {\psi^{adv,F}_r} = \displaystyle\sum_{x,A\in F,a \in \{ 0,1 \}^t}\alpha_{x,A,a}\ket{x}_e\otimes\ket{A,a + v_A(s,r)}_q$.
\begin{definition}
A secret sharing scheme \cal S is \em perfectly secure against superposition $F$-attacks \em if, and only if, for all unitary matrices, $U^{adv,F}_{query}:  \mathcal{H}_{env} \otimes \mathcal{H}_{query} \rightarrow \mathcal{H}_{env} \otimes \mathcal{H}_{query}$ and all possible pairs of inputs, $ s,s'\in \mathbb S$, 
\begin{equation*}
\rho^{adv,F}_s = \rho^{adv,F}_{s'}
\end{equation*}
\end{definition}

For an adversary structure $F$, we define $F^2= \{ A| \ A = B\cup C\mbox{ where }B,C\in F\}$.

\begin{theorem}\label{thm:secureSS}
Let $G$ be the classical adversary structure for $\cal S$.
$\cal S$ is perfectly secure against superposition $F$-attacks if
and only if $F^2\subseteq G$.
\end{theorem}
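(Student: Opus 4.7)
The plan is to reduce the matrix-element equality $\rho^{adv,F}_s = \rho^{adv,F}_{s'}$ to a purely classical condition on the joint distributions of shares of pairs of subsets appearing in the adversary's query. Expanding the definition of $\ket{\psi^{adv,F}_r}$ and reading off
\[ \langle x,A,b \mid \psi^{adv,F}_r\rangle = \alpha_{x,A,b\oplus v_A(s,r)}, \]
I obtain
\[ \langle x,A,b \mid \rho^{adv,F}_s \mid x',A',b' \rangle = \sum_{r\in\mathcal{R}} \alpha_{x,A,b\oplus v_A(s,r)} \,\overline{\alpha_{x',A',b'\oplus v_{A'}(s,r)}}. \]
The crucial observation is that, once the query amplitudes $\alpha$ are fixed, this sum depends on $s$ only through the joint counting function
\[ N^{A,A'}_s(u,w) = \bigl|\{r\in\mathcal{R} : v_A(s,r) = u,\; v_{A'}(s,r) = w\}\bigr|, \]
for each pair $(A, A')$ in the support of $\alpha$.

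The next step is an auxiliary claim: $N^{A,A'}_s$ is constant in $s$ for all $u,w$ if and only if $A \cup A' \in G$. The pair $(v_A(s,r), v_{A'}(s,r))$ is obtained from $v_{A\cup A'}(s,r)$ by a fixed, $s$-independent reshuffling that duplicates the shares of players in $A\cap A'$, so the joint distribution of $(v_A, v_{A'})$ and the distribution of $v_{A\cup A'}$ determine one another; one is $s$-independent iff the other is, i.e., iff $A\cup A' \in G$. With this in hand the $\Leftarrow$ direction of the theorem is immediate: if $F^2 \subseteq G$ then every pair $A,A' \in F$ satisfies $A\cup A' \in G$, so every matrix entry above is $s$-independent for every query, yielding $\rho^{adv,F}_s = \rho^{adv,F}_{s'}$.

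For the $\Rightarrow$ direction I would contrapose: pick witnesses $A_0, A'_0 \in F$ with $A_0 \cup A'_0 \notin G$ and exhibit the concrete superposition attack whose query unitary sends $\ket{0,0}_q$ to $\tfrac{1}{\sqrt{2}}(\ket{A_0,0} + \ket{A'_0,0})$, with trivial environment. Plugging into the matrix-entry formula, the off-diagonal block collapses to $\langle A_0, b \mid \rho^{adv,F}_s \mid A'_0, b'\rangle = \tfrac{1}{2} N^{A_0,A'_0}_s(b, b')$, and by the auxiliary claim together with the failure of classical security for $A_0\cup A'_0$ there exist $s, s'$ and $b, b'$ where these entries differ, witnessing $\rho^{adv,F}_s \neq \rho^{adv,F}_{s'}$. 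The main obstacle I expect is the auxiliary claim: one must carefully account for $A \cap A' \neq \emptyset$, where duplicated shares could in principle wash out $s$-dependence, but since the duplication pattern is a deterministic function of the set structure and independent of $s$, the bijective correspondence with the distribution of $v_{A\cup A'}$ survives and the equivalence goes through cleanly.
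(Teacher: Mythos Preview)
Your proposal is correct and follows essentially the same approach as the paper: expand $\rho^{adv,F}_s$, observe that each $(A,A')$-block is determined by the joint distribution of $(v_A(s,r),v_{A'}(s,r))$, and for the converse exhibit the two-term superposition $\tfrac{1}{\sqrt{2}}(\ket{A_0,0}+\ket{A'_0,0})$. Your treatment is in fact slightly cleaner than the paper's in two places: you make the regrouping via $N^{A,A'}_s$ explicit (so the reduction to the joint distribution is transparent), and in the converse you use only that $A_0\cup A'_0\notin G$ implies the distribution of $v_{A_0\cup A'_0}$ depends on $s$, whereas the paper asserts the stronger and unnecessary claim that the secret is \emph{uniquely determined} from those shares.
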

\begin{proof}
For the forward direction, consider the adversary's final state, 
\begin{eqnarray*}
\rho^{adv}_s &=& \displaystyle\sum_{r \in \mathcal{R}} p_r\proj{\psi^{adv}_r} \\
 &=& \displaystyle\sum_{r\in\mathcal{R}, x,x', A,A\in F,a,a' \in \{ 0, 1 \}^t}p_r\alpha_{x,A,a}\alpha^ *_{x',A',a'}\ket{x}_e\bra{x}_e\otimes\ket{A,a + v_A(s,r)}_q\bra{A',a' + v_{A'}(s,r)}_q
\end{eqnarray*}
Now, for any fixed $A$, $A'$, $a$, $a'$ and $s$, consider the matrix $\displaystyle\sum_{r \in \mathcal{R}}p_r\ket{A,a +  v_A(s, r)}_q \bra{A',a' + v_{A'}(s,r)}_q$. The crucial
observation now is that this matrix is in 1-1 correspondence with the joint distribution of $v_A(s,r)$
and $v_{A'}(s,r)$. Namely, its entries are indexed by pairs of strings $(\alpha,\beta)$, where $\alpha$, $\beta$ are strings of the same length. And furthermore the $(\alpha,\beta)$'th entry is the probability that the events $v_A(s,r) = \alpha + a$ and $v_{A'}(s,r) = \beta + a'$ occur simultaneously.
Now, if $F^2 \subseteq G$, we have that $\cal S$ is perfectly secure against classical $F^2$-attacks. Therefore
the joint distribution of $v_A(s,r)$ and $v_{A'}(s,r)$ does not depend on $s$, consequently each matrix $\displaystyle\sum_{r \in \mathcal{R}}p_r\ket{A,a + v_A(r,s)}_q \bra{A',a' + v_{A'}(s,r)}_q$ is independent of $s$ as well. Hence $\forall s,s'\in \mathbb S: \rho^{adv,F}_s = \rho^{adv,F}_{s'}$ as required. 

For the only-if part, assume for contradiction that $F^2\not \subseteq G$, i.e., there exist
$A_0,A_1$ such that $A_0\cup A_1 \not\in G$. It follows that a secret shared using
$\cal S$ is uniquely determined from shares in $A_0\cup A_1 $.
Then consider the query $\ket{\omega_{{\alpha}}} = (\ket{A}\ket{0} + \ket{A'}\ket{0})/\sqrt{2}$.
By the same computation as above, we see that
$\rho^{adv}_{s}$ contains a submatrix of form
 $\displaystyle\sum_{r \in \mathcal{R}}\ket{a_A + v_{A}({s},{r})}\bra{a_{A'}+ v_{A'}({s},{r})}$, that corresponds
to the joint distribution of shares in $A$ and $A'$. But since the secret is uniquely
determined from these shares, it follows that this submatrix is different for different secrets,
and hence we get that there exists a measurement with non-zero bias towards the secret, and so $\cal S$ is not perfectly secure against quantum $F$-attacks. This is exactly the result we saw in the small example in Section \ref{sec:2party_exp}.
\end{proof}

\subsection{Simplified models for secret sharing}
When formalizing superposition attacks on secret sharing we need to consider if we allow the adversary to use different values for the response register. The choice provably make a
difference for the strength of the model, and both options can be
justified from a physical perspective. We therefore cover both
models. 
  When the adversary runs a classical component in superposition, then
  the reply will in general be a superposition. This opens the
  question of how the reply is delivered. In quantum information
  processing, it is customary that the result is xor'ed onto a
  response register $a$ supplied along with the input. I.e., for a
  function $f$ on is given a box which on classical input
  $\ket{x}\ket{a}$, the output is $\ket{x}\ket{a \oplus f(x)}$. This
  is convenient, as it is invertible, so the action of the box on a
  superposition is given by its actions on the classical inputs. This
  approach is reasonable in quantum information processing, as one is
  typically designing the boxes one self. If $f$ is a database one can
  simply design the quantum version to supply the output by xor'ing it
  onto a response register. Consider, however, the prover in a
  zero-knowledge proof, which is sent a challenge $e$ and then sends
  back $z(e)$. Even though the prover might be tricked into running on
  a superposition without noticing it, it does not seem reasonable
  that the prover would not notice it if we sent along a response
  register and asked her to xor her resply onto this register. In such
  a setting it seems more reasonable that the box/prover creates the
  response registers and returns them to the attacker/verifier. We
  model the setting of \emph{created response registers} by
  restricting the more general setting of \emph{supplied response
    registers} by allowing only $a=0$. In that case the response from
  the box would be $\ket{x}\ket{f(x)}$.
 
\subsection{Attacks on Secret Sharing}\label{sec:SSattack}
Even if a secret sharing scheme is not perfectly secure according the Theorem \ref{thm:secureSS} it does not tell us anything about how much information the adversary can actually gain on the secret. One might even hope this could become negligible by increasing the amount of randomness used to create the shares. However, in this section we show that, for any two-party Shamir secret sharing scheme, an attack can distinguish between to possible secrets with considerable bias. The attack works even in the restricted setting of supplied response registers.
\begin{lemma}
Consider a two-party Shamir secret sharing scheme $\cal S$. For any two  secrets $s, s'\in \mathbb{S} : s \neq s'$, there exists a query with $a=0$ that will allow an adversary to distinguish between the two with probability at least $p_{guess}$, where
\begin{equation*}
p_{guess} \geq \frac{3}{4}
\end{equation*}
\end{lemma}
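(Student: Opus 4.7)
The plan is to exhibit the superposition query $\ket{\omega}_q=\tfrac{1}{\sqrt{2}}(\ket{1,0}+\ket{2,0})$ with trivial environment and show the resulting states $\rho^{adv}_s$ and $\rho^{adv}_{s'}$ have trace distance at least $1/2$, since the optimal single-shot distinguisher then succeeds with probability $\tfrac{1}{2}+\tfrac{1}{2}D\geq 3/4$. After the oracle acts, the adversary's state conditioned on secret $s$ is the $r$-average of $\bigl(\ket{1,v_1(s,r)}+\ket{2,v_2(s,r)}\bigr)/\sqrt{2}$.

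First I would decompose $\rho_s$ into four blocks along the $A\in\{1,2\}$ register. The diagonal blocks $A=A'$ collapse to $\tfrac{1}{2}\ket{A}\bra{A}\otimes I$ because single-share Shamir is perfectly secret against one corruption, so they are independent of $s$. All the $s$-dependence therefore sits in the off-diagonal block $\tfrac{1}{2|\mathcal{R}|}P^s$, where $P^s=\sum_r \ket{v_1(s,r)}\bra{v_2(s,r)}$. Since $(s,r)\mapsto(v_1(s,r),v_2(s,r))$ is a bijection for 2-party Shamir, $P^s$ is a permutation matrix on $\mathbb{F}_q$; using the anti-diagonal block structure of $\rho_s-\rho_{s'}$ and the identity $\bigl\|{\scriptstyle\begin{pmatrix}0 & X \\ X^\dagger & 0\end{pmatrix}}\bigr\|_1 = 2\|X\|_1$, I obtain $D(\rho_s,\rho_{s'}) = \|P^s-P^{s'}\|_1/(2|\mathcal{R}|)$.

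The second step identifies $P^s(P^{s'})^{-1}$ as a nontrivial translation. Writing the shares as $v_i(s,r)=s+r\alpha_i$ and setting $\beta=\alpha_1/\alpha_2\neq 1$, a direct computation gives $P^s(P^{s'})^{-1}=T_c$ for $c=(s-s')(1-\beta)\neq 0$, so $\|P^s-P^{s'}\|_1=\|T_c-I\|_1$. I would then diagonalize $T_c$ using the additive characters of $\mathbb{F}_{p^k}$: its spectrum consists of the $p$-th roots of unity, each with multiplicity $q/p$. Since $T_c-I$ is normal, its singular values are $|e^{2\pi ij/p}-1|=2\sin(\pi j/p)$, and the standard identity $\sum_{j=1}^{p-1}\sin(\pi j/p)=\cot(\pi/(2p))$ yields $\|T_c-I\|_1=(2q/p)\cot(\pi/(2p))$. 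Hence $D(\rho_s,\rho_{s'})=\cot(\pi/(2p))/p$.

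The final step is the trigonometric inequality $\cot(\pi/(2p))/p\geq 1/2$, i.e., $\tan(\pi/(2p))\leq 2/p$, which follows from convexity of $\tan$ on $[0,\pi/4]$: the chord from $(0,0)$ to $(\pi/4,1)$ bounds $\tan$ above by $4x/\pi$, and this applies for every prime $p\geq 2$ (2-party Shamir forces $p\geq 3$ in any case). The step I expect to be the main obstacle is the character-theoretic computation of the spectrum of $T_c$ in the prime-power case, since one must handle $\mathbb{F}_{p^k}$ rather than only the cyclic case $\mathbb{F}_p$; the cleanest workaround is to note that $T_c$ is a product of $q/p$ disjoint $p$-cycles and hence unitarily equivalent to $q/p$ copies of the cyclic shift on $\mathbb{Z}/p\mathbb{Z}$, reducing everything to that one-dimensional Fourier computation.
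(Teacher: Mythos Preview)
Your proof is correct, but it takes a more computational route than the paper's. Both proofs start identically: the uniform query $\tfrac{1}{\sqrt2}(\ket{A_0,0}+\ket{A_1,0})$, cancellation of the diagonal blocks by $1$-privacy, and the reduction of $\|\rho_s-\rho_{s'}\|_1$ to the trace norm of the off-diagonal block $S=M_s-M_{s'}$ (your $P^s-P^{s'}$ up to the normalisation $1/|\mathcal R|$). From there the arguments diverge. You exploit the explicit affine form $v_i(s,r)=s+r\alpha_i$ to identify $P^s(P^{s'})^{-1}$ with a nonzero additive translation $T_c$ on $\mathbb F_q$, diagonalise $T_c$ via additive characters, and sum the singular values to get the \emph{exact} trace distance $D=\cot(\pi/(2p))/p$, which you then bound below by $1/2$ via convexity of $\tan$. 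The paper instead uses the variational characterisation $\|S\|_1=\max_U|\operatorname{Tr}(SU)|$ and plugs in the permutation matrix $\tilde M_s^T$ (your $(P^s)^T$): since the pair $(v_{A_0}(s,r),v_{A_1}(s,r))$ determines $s$, the cross term $\operatorname{Tr}(M_{s'}\tilde M_s^T)$ vanishes and one gets $\|S\|_1\geq 1$ immediately, without any Fourier analysis or the explicit linear share formula.

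So your approach buys the precise value of the trace distance (and shows the $3/4$ bound is tight exactly when the characteristic is $2$), at the cost of being tied to the specific Shamir share map; the paper's approach is shorter and uses only the abstract property that two shares reconstruct the secret. One small slip: your parenthetical ``2-party Shamir forces $p\geq 3$'' confuses the field size with the characteristic---$\mathbb F_4$ has $p=2$---but this is harmless since your convexity bound already covers $p=2$.
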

\begin{proof}
The adversary will need no auxiliary register, so let his state simply be the query register. He constructs the following (pure state) query
\begin{equation*}
\ket{\omega} =\frac{1}{\sqrt{2}} (\ket{A_0,0} + \ket{A_1,0})
\end{equation*}
where $\ket{\omega} \in \mathcal{H}_{query}$. The final state the adversary sees for different secrets is then,

\begin{equation*}
\rho^{adv}_s = \displaystyle\sum_{r \in \mathcal{R}}  p_r \proj{\psi^{adv}_r}
\end{equation*}
where $\ket {\psi^{adv}_r} = \displaystyle\sum_{A\in \{A_0, A_1 \}} \frac{1}{\sqrt{2}} \ket{A,v_A(s,r)}_q$ and $r \in \mathcal{R}$.

It is well-known that the adversary's probability of distinguishing between two such states, $\rho^{s}_{adv}$ and $\rho^{s'}_{adv}$, is $p_{guess} = \frac{1}{2} + \frac{1}{4} \times |\rho^{s}_{adv}-\rho^{s'}_{adv}|_{Tr}$, where $|\dots|_{Tr}$ denotes the trace norm. Define the difference between the two states as the matrix $\Delta$. 
\begin{eqnarray*}
\Delta &=& \rho^{s}_{adv}-\rho^{s'}_{adv}\\
&=& \frac{1}{2}\displaystyle\sum_{A,A'\in \{A_0, A_1 \},r\in \mathcal{R}}p_r\ket{A,v_A(s,r)}_q \bra{A',v_{A'}(s,r)}_q -\frac{1}{2}\displaystyle\sum_{A,A'\in \{A_0, A_1 \},r} p_r\ket{A,v_A(s',r)}_q \bra{A',v_{A'}(s',r)}_q\\
&=& \frac{1}{2}\displaystyle\sum_{A,A'\in \{A_0, A_1 \}} \ket{A} \bra{A'} \left(\displaystyle\sum_{r \in \mathcal{R}} p_r\ket{v_A(s,r)}_q \bra{v_{A'}(s,r)}_q -\frac{1}{2}\displaystyle\sum_{r \in \mathcal{R}} p_r\ket{v_A(s',r)}_q \bra{v_{A'}(s',r)}_q\right)
\end{eqnarray*}
Since the state for any party individually is independent of the secret we have that for all $s,s'\in \mathbb{S}$

\begin{equation*}\sum_{r \in \mathcal{R}} \ket{v_A(s,r)}\bra{v_A(s,r)} = \sum_{r \in \mathcal{R}} \ket{v_A(s',r)}\bra{v_A(s',r)}
\end{equation*}
which means we only need to consider $A,A'\in \{ A_0,A_1 \}|A \neq A'$.
\begin{eqnarray*}
\Delta&=&\frac{1}{2}  \ket{A_0}\bra{A_1} \otimes\left(\displaystyle\sum_{r \in \mathcal{R}} p_r\ket{v_{A_0}(s,r)}_q \bra{{v_{A_1}}(s,r)}_q -\frac{1}{2}\displaystyle\sum_{r \in \mathcal{R}} p_r\ket{v_{A_0}(s',r)}_q \bra{v_{A_1}(s',r)}_q\right)\\
         &+& \frac{1}{2} \ket{A_1}\bra{A_0} \otimes\left(\displaystyle\sum_{r \in \mathcal{R}} p_r\ket{v_{A_1}(s,r)}_q \bra{{v_{A_0}}(s,r)}_q -\frac{1}{2}\displaystyle\sum_{r \in \mathcal{R}} p_r\ket{v_{A_1}(s',r)}_q \bra{v_{A_0}(s',r)}_q\right)\\
\end{eqnarray*}

Define the two submatrices:
\begin{equation}\label{eq:submatrix}
S =   \displaystyle\sum_{r \in \mathcal{R}} p_r\ket{v_{A_0}(s,r)}_q \bra{{v_{A_1}}(s,r)}_q -\frac{1}{2}\displaystyle\sum_{r \in \mathcal{R}} p_r\ket{v_{A_0}(s',r)}_q \bra{v_{A_1}(s',r)}_q
\end{equation}
\begin{equation*}
S^\dagger = 
 \displaystyle\sum_{r \in \mathcal{R}} p_r\ket{v_{A_1}(s,r)}_q \bra{{v_{A_0}}(s,r)}_q -\frac{1}{2}\displaystyle\sum_{r \in \mathcal{R}} p_r\ket{v_{A_1}(s',r)}_q \bra{v_{A_0}(s',r)}_q
\end{equation*}
such that $\Delta$ is the $2\times2^t$ by $2\times2^t$ matrix
\begin{equation}
\Delta =  \label{eq:delta_simplified}
\frac{1}{2}  \begin{pmatrix}
  0 & S\\
  S^\dagger & 0 
\end{pmatrix}
\end{equation}
It is well-known that if $\Delta$ is of the form (\ref{eq:delta_simplified}) and $\frac{1}{2}S$ has singular values $s_1 \geq ... \geq s_p$ then $\Delta$ has eigenvalues $\pm s_1, ..., \pm s_p$. Since $\Delta$ is Hermitian, the trace norm is the sum of the absolute eigenvalues. From this we conclude that $|\Delta|_{Tr} = |S|_{Tr}$ and we can reduce our problem to that of finding the trace norm of S. Let
\begin{eqnarray*}
S &=&  M_{s} - M_{s'} \\
M_{s} &=& \sum_{r \in \mathcal{R}}p_r\ket{v_{A_0}(s,r)}\bra{v_{A_1}(s,r)}
\end{eqnarray*}
Note that for the state to be normalized it must be that 
\begin{equation*}
\displaystyle\sum_{i,j \in \{ 0,1\}^t}[M_{s}]_{i,j} =\displaystyle\sum_{i,j \in \{ 0,1\}^t,r} p_r\delta_{v_{A_0}(s,r),i}\delta_{v_{A_1}(s,r),j} = 1
\end{equation*}
Now, define the matrix, $\tilde{M}_{s}$
\begin{equation*}
\tilde{M}_{s} = \sum_{r \in \mathcal{R}}\ket{v_{A_0}(s,r)}\bra{v_{A_1}(s,r)} + \displaystyle\sum^{t^2-1}_{i=|\mathcal{R}|}\proj{i}	
\end{equation*}
It's straight forward to see that $\tilde{M}_{s} \tilde{M}_{s}^T = \id$,
\begin{eqnarray*}
\left(\sum_{r \in \mathcal{R}}\ket{v_{A_0}(s,r)}\bra{v_{A_1}(s,r)} + \displaystyle\sum^{t^2-1}_{i=|\mathcal{R}|}\proj{i}\right) \times \left( \sum_{r' \in \mathcal{R}}\ket{v_{A_1}(s,r')} \bra{v_{A_0}(s,r')}+ \displaystyle\sum^{t^2-1}_{i=|\mathcal{R}|}\proj{i}	\right) \\
=\displaystyle\sum_{r,r' \in \mathcal{R}}\ket{v_{A_0}(s,r)} \bra{v_{A_0}(s,r')} \bra{v_{A_1}(s,r)}\ket{v_{A_1}(s,r')}+ \displaystyle\sum^{t^2-1}_{i=|\mathcal{R}|}\proj{i} \\
=\displaystyle\sum_{r\in \mathcal{R}}\ket{v_{A_0}(s,r)} \bra{v_{A_0}(s,r)} + \displaystyle\sum^{t^2-1}_{i=|\mathcal{R}|}\proj{i} = \id \\
\end{eqnarray*}
Note that $ \displaystyle\sum^{t^2-1}_{i=|\mathcal{R}|}\proj{i}	$ is simply used to pad the subspace to ensure that the matrix is unitary. It will not be of any importance in the following calculations and can simply be ignored.
It is well-known that
\begin{equation*}
|S|_{Tr} = \operator{max_U}\{|\operator{Tr}(SU)| \}
\end{equation*}
where U is any unitary matrix. In other words, any specific matrix $U$ is going to give a lower bound on the trace norm. Both $\tilde{M}_{s}$ and $\tilde{M}^T_{s}$ are such unitary matrices,
\begin{eqnarray*}
|S|_{Tr} &=& \operator{max_U}\{|\operator{Tr}(SU)| \} \geq |\operator{Tr}(S\tilde{M}^{T}_{s})|\\
&=&|\operator{Tr}((M_{s} - M_{s'})\tilde{M}^{T}_{s})| \\
&=&|\displaystyle\sum_{i,j \in \{ 0,1\}^t} [M_{s}]_{i,j}[\tilde{M}_{s}]_{i,j}- \displaystyle\sum_{i,j \in \{ 0,1\}^t} [M_{s'}]_{i,j}[\tilde{M}_{s}]_{i,j}| \\
&=&|1- \displaystyle\sum_{i,j \in \{ 0,1\}^t} [M_{s'}]_{i,j}[\tilde{M}_{s}]_{i,j}| \\
\end{eqnarray*}
Now note that 
\begin{equation*}
\displaystyle\sum_{i,j \in \{ 0,1\}^t} [M_{s'}]_{i,j}[\tilde{M}_{s}]_{i,j} = \displaystyle\sum_{i,j \in \{ 0,1\}^t,r,r'} p_r\delta_{v_{A_0}(s,r),i}\delta_{v_{A_1}(s,r),j} \delta_{v_{A_0}(s',r'),i}\delta_{v_{A_1}(s',r'),j} 
\end{equation*}
However the pair  $(v_{A_0}(s,r), v_{A_1}(s,r))$, uniquely defines $s$  and hence the sum is $0$ unless $s = s'$. Therefore for $s\neq s': |S|_{Tr} \geq 1$.
\begin{equation*}
p_{guess} = \frac{1}{2} + \frac{1}{4} |\Delta|_{Tr}  =  \frac{1}{2} + \frac{1}{4} |S|_{Tr} \geq \frac{1}{2} + \frac{1}{4}\times 1 = \frac{3}{4}
\end{equation*}
which completes the proof. \qed
\end{proof}

\newcommand{\commit}{{\tt Commit}}
\newcommand{\pk}{{\tt pk}}
\newcommand{\sk}{{\tt sk}}
\newcommand{\pkH}{{\tt pkH}}
\newcommand{\pkB}{{\tt pkB}}
\newcommand{\GH}{{\cal G}_{\tt H}}
\newcommand{\GB}{{\cal G}_{\tt B}}

\section{Zero-Knowledge}
In this section, we present a zero-knowledge proof for any NP problem in the common reference string model. The proof is sound for an unbounded prover (quantum or not) and is computationally zero-knowledge for a polynomially bounded quantum verifier, even if superposition attacks are allowed.

For the protocol, we need a commitment scheme with special properties: we require a {\em keyed} commitment scheme $\commit_\pk$, where the corresponding public key $\pk$ is generated by one of two possible key-generation algorithms: $\GH$ or $\GB$. For a key $\pkH$ generated by $\GH$, the commitment scheme $\commit_\pkH$ is unconditionally hiding, whereas the other generator, $\GB$, actually produces a key {\em pair} $(\pkB,\sk)$, so that the secret key $\sk$ allows to efficiently extract 
$m$ from $\commit_\pkB(m,r)$, and as such $\commit_\pkB$ is unconditionally binding. Furthermore, we require that keys $\pkH$ and $\pkB$ produced by the two generators
are computationally indistinguishable, for any family of polynomial size quantum circuits. 
We call such a commitment scheme a {\em dual-mode} commitment
scheme.
\footnote{The notions of dual-mode {\em cryptosystems} and of meaningful/meaningless encryptions, as introduced in~\cite{PVW08} and~\cite{KN08}, are similar in spirit but differ slightly technically. }
As a candidate for implementing such a system, we propose the public-key encryption scheme of Regev~\cite{Regev05}, which is
based on a worst-case lattice assumption and is not known to be breakable even by
(efficient) quantum algorithms. Regev does not explicitly state that the
scheme has the property we need, but this is implicit in his proof that the underlying
computational assumption implies semantic security.
\footnote{The proof
   compares the case where the public key is generated normally to a
   case where it is chosen with no relation to any secret key. It is
   then argued that the assumption implies that the two cases are
   computationally  indistinguishable, and that in the second case, a
   ciphertext carries essentially no information about the
   message. This argument implies what we need.}
   
\subsection{The Model}   
We now describe the framework for our protocol: the proof system is specified w.r.t. a language $L$, and we have a prover $P$ and a verifier $V$, both 
are assumed classical (when playing honestly). They get as input a common reference string $CRS$ chosen with a prescribed distribution $\sigma$ and a string $x$. $P$ and $V$ interact and at the end $V$ outputs $accept$ or $reject$. The first two properties we require are standard:
{\em Completeness:} if $x\in L$ and $P,V$ follow the protocol, $V$ outputs $accept$ with probability 1.
{\em Soundness:} if $x\not\in L$ (but $CRS$ is chosen according to $\sigma$) then for any prover $P^*$,
$V$ outputs $accept$ with probability negligible (in the length of $x$) when interacting with $P^*$ on input $x$ and $CRS$.

For zero-knowledge, we extend the capabilities of a cheating verifier $V^*$ so it may do a
superposition attack
For simplicity, we give our definition of superposition zero-knowledge only for 3-move public coin protocols, i.e., conversations
are assumed to have the form $(a,e,z)$, where $e$ is a random challenge issued by the verifier.
It is not hard to extend the definition but the notation becomes more cumbersome.
First, $V^*$ is assumed to be a quantum machine, and the protocol is executed as follows:
$V^*$ receives $x, CRS$ and $P$'s first message $a$. Now, in stead of sending a classical challenge
$e$, $V^*$ is allowed to send a query 
$$\sum_{e,y}\alpha_{e,y} \ket{e}\ket{y}.$$ 
We assume the the prover will process the query following his normal algorithm in superposition,
so the verifier will get the same
two registers back, in state
$$\sum_{e,y}\alpha_{e,y} \ket{e}\ket{y+z(x,e,\rho)},$$
where $z(x,e,\rho)$ is $P$'s response to challenge $e$ on input  $x$ and internal 
randomness $\rho$. Finally, $V^*$ outputs 0 or 1. Let $p_{real}(x)$ be the probability that
1 is output.
We say that the proof system is {\em superposition zero-knowledge} if there exists an polynomial time
quantum machine, the simulator $S$, such that the following holds for any cheating verifier 
$V^*$ and $x\in L$: $S$ interacts with $V^*$ on input $x$, and we let $p_{sim}(x)$ be the probability
that $V^*$ outputs 1. Then $|p_{real}(x)- p_{sim}(x)|$ is negligible (in the length of $x$).

Note that, as usual in the CRS model, $S$ only gets $x$ as input and may therefore generate the reference string itself.

\subsection{The Protocol}
We now describe the basic ideas behind our protocol: we will let the CRS contain the following:
$\pkB, c= \commit_{\pkB}(0), \pkB'$, where the public keys are both generated by $\GB$. Then, using
a standard trick, we will let $P$ show that either $x\in L$ or $c$
contains a 1. Since of course the latter statement is false, $P$ still needs to convince us that 
$x\in L$. The simulator, on the other hand, can construct a reference string where 
$c$ does contain 1 and simulate by following the protocol. The CRS will look the same to the verifier
so we just need that the change of witness used is not visible in the proof, i.e., the proof
should be witness indistinguishable.
In this way, we can simulate without rewinding, and this allows $V^*$ to be quantum. 

However, standard techniques for witness indistinguishability are not sufficient to handle a superposition attack. For this, we need to be more specific about the protocol:
a first attempt (which does not give us soundness) is that $P$ will secret-share his witness $w$ (where for the honest prover, $w$ will be a witness for $x\in L$), to create shares $s_1,...,s_n$ where we assume the scheme has $t$-privacy.
Then $P$'s first message is a set of commitments 
$a= (\commit_{\pkB'}(s_1, r_1),...,  \commit_{\pkB'}(s_n, r_n))$. The verifier's challenge $e$ will point out a random subset of the commitments, of size $t/2$, and the prover opens the commitments requested.
Intuitively, this is zero-knowledge by Theorem~\ref{thm:secureSS}: since we limit the number of shares the verifier can ask for to half the threshold of the secret sharing scheme, the state $V^*$ gets back contains no information on the secret $w$. 

On the other hand, this protocol is of course not sound, the verifier cannot check that the prover commits to meaningful shares of anything. To solve this, we make use of the ``MPC in the head'' technique
from~\cite{IKOS09}: Here, we make use of an $n$-party protocol in which the witness $w$ is secret-shared among the player, and a multiparty computation is done to check whether $w$ is correct with respect to claim on the the public input, namely in our case $x\in L$ and the $c$ from the $CRS$ contains $1$. Finally all players output $accept$ or $reject$ accordingly. It is assumed that the protocol is secure against active corruption
of $t$ players where $t$ is $\Theta(n)$. We will call this protocol $\pi_{L,CRS}$ in the following.
Several examples of suitable protocols can be found in \cite{IKOS09}.
In their construction, the prover emulates an
execution of $\pi$ in his head, and we let $v_{\pi_{L,CRS}}(i,\rho)$ denote he view of virtual player $i$,
where $\rho$ is the randomness used. The prover then commits to $v_{\pi_{L,CRS}}(i,\rho)$, for $i=1...n$ and
the verifier ask the prover to open $t$ randomly chosen views that are checked for consistency and adherence to $\pi_{L,CRS}$. It is shown in \cite{IKOS09} that if no valid witness exists for the public input, then the verifier will detect an error with overwhelming probability.

Now, observe that the process of emulating $\pi$ can be thought of as a secret sharing scheme, where the prover's witness $w$ is shared and each $v_{\pi}(i,\rho)$ is a share: indeed any $t$ shares contain no information on $w$ by $t$-privacy of the protocol. Therefore combining this with our rudimentary idea
from before gives us the solution.

\medskip

\noindent
{\bf Superposition-secure zero-knowledge proof for any $NP$-language} $L$.

\noindent
The public input is $x$, of length $k$ bits.
The distribution $\sigma$ generates the common reference string as
$\pkB, c= \commit_{\pkB}(0), \pkB'$, where the public keys are both generated by $\GB$
on input $1^k$.

\begin{enumerate}
\item
The prover $P$ emulates $\pi_{L,CRS}$ to generate $v_{\pi_{L,CRS}}(i,\rho)$ and sends
$\commit_{\pkB'}(v_{\pi_{L,CRS}}(i,\rho), r_i)$,  for $i=1...n$,  to the verifier $V$.
\item
$V$ sends a challenge $e$ designating a random subset of the commitments of size $t/2$.
\item
$P$ opens the commitments designated by $e$, $V$ checks the opened views according
to the algorithm described in \cite{IKOS09}, and accepts or rejects according to the result.
\end{enumerate}

\begin{theorem}
If $(\GB, \GH, \commit)$ form a secure dual-mode commitment scheme, then the above
protocol is complete, sound and superposition zero-knowledge.
\end{theorem}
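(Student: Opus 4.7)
My plan is to prove the three properties separately, with most of the effort going into superposition zero-knowledge. Completeness is immediate: when $x\in L$ and the honest prover runs $\pi_{L,CRS}$ on the real witness $w$, the committed views are consistent views of an accepting execution of an $n$-party MPC, so opening any $t/2$ of them passes the consistency check of~\cite{IKOS09}.

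For soundness, I would observe that since $\pkB$ is in unconditionally binding mode, $c = \commit_{\pkB}(0)$ is binding to $0$, so the ``OR'' statement embedded in $\pi_{L,CRS}$ reduces to $x\in L$ alone; also $\pkB'$ is binding, so the first message of any (possibly unbounded quantum) prover fixes a tuple of $n$ views before the challenge is issued. I would then invoke the soundness analysis of~\cite{IKOS09}: if $x\notin L$, any such tuple must contain a constant fraction of pairs of views inconsistent with the MPC protocol, and a random subset of size $t/2$ catches such an inconsistency with probability bounded away from $0$; sequential repetition drives the soundness error to negligible. The only departure from~\cite{IKOS09} is the subset size, which merely changes the constant in the single-round soundness error.

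For superposition zero-knowledge, the simulator $S$ generates the CRS by running $\GB$ to obtain $(\pkB,\sk)$ and setting $c = \commit_{\pkB}(1)$, and running $\GH$ to obtain $\pkH'$ in place of $\pkB'$. With the opening of $c$ to $1$ in hand, $S$ has a valid ``fake'' witness for the OR statement, so it simply runs the honest prover algorithm against $V^*$, answering the (possibly superposed) challenge $e$ exactly as the real prover would. Indistinguishability from the real execution is established through three hybrids: first, replace $c$ by $\commit_{\pkB}(1)$, which is indistinguishable because $\commit_{\pkB}$ is computationally hiding (this follows from the unconditional hiding of $\commit_{\pkH}$ together with the computational indistinguishability of $\pkH$ and $\pkB$ against polynomial-size quantum circuits, via a two-step hybrid on the commitment key); second, replace $\pkB'$ by $\pkH'$, indistinguishable directly by dual-mode; third, switch the witness the prover feeds into $\pi_{L,CRS}$ from $w$ to the opening of $c$.

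The last hybrid step is the crucial one and relies on Theorem~\ref{thm:secureSS}. In that hybrid, the commitments to all $n$ views are unconditionally hiding under $\pkH'$, so the unopened commitments carry no information about the witness; only the views revealed in response to the verifier's challenge can distinguish. Since the verifier's challenge is a superposition over subsets of size $t/2$ with an XOR-style response register, this is exactly a superposition $F$-attack on the induced secret-sharing scheme, where $F$ is the family of subsets of size $t/2$. Because $\pi_{L,CRS}$ is $t$-private, the classical adversary structure of this secret-sharing contains all subsets of size at most $t$ and hence contains $F^2$, so Theorem~\ref{thm:secureSS} implies that the density matrix the verifier ends up holding is independent of the witness used. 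The main obstacle I expect is a bookkeeping one: the prover reveals the pair $(v_{\pi_{L,CRS}}(i,\rho),r_i)$ rather than just the view $v_{\pi_{L,CRS}}(i,\rho)$, so I would need to argue that the commitment randomness $r_i$, sampled independently of the witness, can be absorbed into an ``extended share'' to which Theorem~\ref{thm:secureSS} applies directly. Once this is handled, the final hybrid step yields perfect indistinguishability and the overall simulation error is at most the sum of the first two distinguishing advantages, which is negligible by the dual-mode assumption.
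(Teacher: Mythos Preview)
Your proposal is correct and follows essentially the same route as the paper: completeness by inspection, soundness by reduction to the IKOS argument (noting that opening $t/2$ rather than $t$ views is immaterial), and zero-knowledge via a hybrid argument that ends by invoking Theorem~\ref{thm:secureSS} once the second key is in hiding mode.

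There is one cosmetic difference worth flagging. Your simulator outputs a CRS of the form $(\pkB,\commit_{\pkB}(1),\pkH')$, keeping the first key in binding mode, whereas the paper's simulator switches \emph{both} keys to hiding mode, producing $(\pkH,\commit_{\pkH}(1),\pkH')$. Correspondingly, your hybrid sequence is: change the committed bit under $\pkB$ (using the standard two-key sub-hybrid you describe to get computational hiding), then switch the second key, then switch the witness; the paper instead first switches the first key to $\pkH$, then changes the committed bit (now statistically) and the second key, then switches the witness. Both orderings are valid and yield the same conclusion; the paper's choice makes the ``change the committed bit'' step statistical rather than computational, but this buys nothing since a computational step is unavoidable anyway. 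Your explicit remark about absorbing the commitment randomness $r_i$ into the share so that Theorem~\ref{thm:secureSS} applies to the pair $(v_{\pi_{L,CRS}}(i,\rho),r_i)$ is a point the paper leaves implicit; note also that the first message $a$ must be handled as part of this argument (it is seen before the superposition query), which is where statistical hiding of $\pkH'$ is used, and that the resulting indistinguishability in the last hop is statistical rather than perfect.
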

\begin{proof}
Completeness is trivial by inspection of the protocol. Soundness follows immediately from the soundness proof in \cite{IKOS09}, we just have to observe that the fact that the prover opens
$t/2$ and not $t$ views makes no difference, in fact the proof holds as long as $\Theta(n)$ views are opened.
For zero-knowledge, we describe a simulator $S$:
It will generate a common reference string as $\pkH, c= \commit_{\pkH}(1), \pkH'$ where both
public keys are generated by $\GH$ on inout $1^k$. It then plays the protocol with $V^*$, answering its quantum
queries by following the protocol. This is possible since $c$ now contains a 1, so $S$ knows a valid
witness. To show that $V^*$ cannot distinguish simulation from protocol, we define series of games
\begin{description}
\item[Game 0] The protocol as described above, but where $P$ talks to $V^*$ doing a superposition attack.
\item[Game 1] As Game 0, but the CRS is generated as $\pkH, c= \commit_{\pkH}(0), \pkB'$
where $\pkH$ is generated by $\GH$ and $\pkB'$ is generated by $\GB$.
\item[Game 2] As Game 1, but the CRS is generated as $\pkH, c= \commit_{\pkH}(1), \pkH'$ where both
public keys are generated by $\GH$.
\item[Game 3] As Game 3, but the $P$ uses as witness the fact that $c$ contains a 1.
\end{description}
Now, Game 0 and Game 1 are computationally indistinguishable by assumption on the dual-mode commitment scheme, and the same is true for Game 1 and Game 2. Game 2 and Game 3 are statistically indistinguishable by Theorem~\ref{thm:secureSS} and the fact that commitments done using $\pkH'$ are statistically hiding. Finally, note that Game 3 is exactly the same game as the simulation.
\end{proof}

\section{Multiparty computation}
In this section we consider the models for MPC protocols. A classical passive attack on a multiparty computation
protocol looks a lot like the attacks on secret sharing: you query for
a subset and get back the party's entire view of the protocol.  Of
course, you can generalize this to a quantum attack in the same way.
And you can ask if there is some adversary structure for which the
protocol would be secure against such an attack, assuming classical
security.

 Security for MPC protocols is usually defined as an adversary's ability to distinguish between an attack in the {\em real world} where he's allowed access to a corruption oracle of some subset of the parties and an {\em ideal world} where the attack is {\em simulated} towards the adversary using the {\em ideal functionality} of the protocol. We hence need to describe both models for the real and for the ideal world. They'll need different spaces and different operations to execute. 
 As before, we have $n$ parties running the protocol. We name the parties $P_1, \ldots, P_n$. Let $[n] = \{ 1, \ldots, n \}$.  An
adversary structure is $F \subset 2^[n]$.  Each party, $P_i$, has local input, $s_i \in \mathbb{S}_i$ with is supplied by the adversary and chooses randomness, $r_i\in \mathcal{R}_i$. $s \in \mathbb{S}$ and $r\in \mathcal{R}$ denotes the concatenation of each of these. $v_i(s, r)$ and $o_i(s, r)$ is the private view and output for the party $P_i$, when the protocol has been run on inputs $s$ and using randomness $r$. Note these are functions and not general quantum operations as the parties, even the corrupted ones, are expected to run the protocol honestly. For $A\subset [n]$, let $v_{A}(s,r) = \{ v_i(s, r) \}_{i \in A}$, $s_{A} = \{ s_i \}_{i \in A}$ $o_{A}(s,r) = \{ o_i(s, r) \}_{i \in A}$ and be strings containing the concatenation of views, input and output for parties $P_i$ with $i \in A$. For convenience in the following we assume that each such string is padded, so that they have the same length (t bits) regardless
of the size of $A$. \\
\subsection{MPC model in the 'Real world'}
First we will consider the case of running and attacking the protocol in the real world. As earlier, all actions taken by the parties and the adversary will be considered purified so the overall state remains pure throughout. Consider the following space,
\begin{equation*}
\mathcal{H}_{total} = \mathcal{H}_{parties} \otimes \mathcal{H}_{in} \otimes \mathcal{H}_{out} \otimes \mathcal{H}_{env} \otimes \mathcal{H}_{query}
\end{equation*}
$\mathcal{H}_{parties}$ contains the private views and purification of the randomness for the parties. \\$\mathcal{H}_{in}$ contains the input the parties will use to run the protocol. \\$\mathcal{H}_{out}$ is where the output will be stored after running the protocol.  \\$\mathcal{H}_{env}$ is the environment and is used to store auxiliary input and any auxiliary register needed by the adversary. The dimension is therefore arbitrary, though finite. \\$\mathcal{H}_{query}$ is where the query to, and response from, the oracle will be stored. \\
Each of these subspaces are, of course, of appropriate (and finite) dimension.\\

We will break the superposition run, and attack, of a MPC protocol in the real world down into four unitaries. After each unitary we will consider the change to the description of the state. In the beginning all the registers are blank (ie. have value 0), except the environment, which might contain some (purified) auxiliary input for the adversary. The initial state is hence,
\begin{equation*}
\ket{init}^{rw}_{t} =  \displaystyle\sum_{x} \alpha_{x}\displaystyle \ket{0}_p\ket{0}_{i}\ket{0}_{o} \ket{x}_e \ket{0}_q
\end{equation*}
where $\ket{init}^{rw}_{t} \in \mathcal{H}_{total}$, $\ket{0}_{p} \in \mathcal{H}_{parties}$, $\ket{0}_{i} \in \mathcal{H}_{in}$, $\ket{0}_{o} \in \mathcal{H}_{out}$, $\ket{0}_{e} \in \mathcal{H}_{env}$ and $\ket{0}_{q} \in \mathcal{H}_{query}$, as should be expected from the notation. The superscript, $rw$, specifies that it's in the real world.\\
The first unitary is applied by the adversary and supplies the inputs to the parties. This is an arbitrary unitary operation. We'll denote it, $U^{adv}_{in}$,
\begin{equation*}
U^{adv}_{in} :  \mathcal{H}_{in} \otimes \mathcal{H}_{env} \rightarrow \mathcal{H}_{in} \otimes \mathcal{H}_{env}.
\end{equation*}
The result of this is that the input registers are now filled. These are in superposition over all possible inputs. The state after the first unitary is therefore,
\begin{equation*}
\ket{1}^{rw}_{t} =  \displaystyle\sum_{x,s} \alpha_{x,s}\displaystyle \ket{0}_p\ket{s}_{i}\ket{0}_{o} \ket{x}_e \ket{0}_q
\end{equation*}
The protocol is now run honestly without intervention from the adversary. This is a classical function run in superposition of the possible inputs and produces a corresponding superposition of private views and outputs for the parties. We'll denote this unitary, $U^{pro}_{run}$,
\begin{equation*}
U^{pro}_{run} :  \mathcal{H}_{parties} \otimes \mathcal{H}_{in} \otimes \mathcal{H}_{out} \rightarrow \mathcal{H}_{parties} \otimes \mathcal{H}_{in} \otimes \mathcal{H}_{out}
\end{equation*}
Recall that we are purifying all actions, hence also the choice of randomness when the protocol is run. The state after the second unitary is therefore,
\begin{equation}
\ket{2}^{rw}_{t} =  \displaystyle\sum_{x,s,r} \alpha_{x,s}\sqrt{p_{r}} \ket{v_{[n]}(s,r)}_p \ket{s}_{i}\ket{o_{[n]}(s,r)}_{o} \ket{x}_e \ket{0}_q \label{eq:rw2}
\end{equation}\\ 
Next the adversary needs to construct his query to the oracle. This includes choosing the superposition of subsets he'll corrupt and associated values for the response registers for input, output and view. For simplicity we'll sometimes refer to these three values by $a$. That is, $a = (a_i, a_o, a_v)$. This is an arbitrary unitary operation. We'll denote it, $U^{adv, F}_{query}$,
\begin{equation*}
U^{adv, F}_{query} :  \mathcal{H}_{env} \otimes \mathcal{H}_{query} \rightarrow \mathcal{H}_{env} \otimes \mathcal{H}_{query}
\end{equation*}
The state is now,
\begin{equation*}
\ket{3}^{rw}_{t} =  \displaystyle\sum_{x,s,r,A,a} \alpha_{x,s,A,a}\sqrt{p_{r}} \ket{v_{[n]}(s,r)}_p \ket{s}_{i}\ket{o_{[n]}(s,r)}_{o} \ket{x}_e \ket{A,a}_q
\end{equation*}\\ 
Next unitary is applied by the oracle, that, for each corrupted subset in the query, fills the input, output and view into the response register supplied by the adversary.  This is a classical function on each corrupted subset and view in the superposition which fills in the input, output and view into the response register. We'll denote this unitary, $U^{oracle}_{res}$,
\begin{equation*}
U^{oracle}_{res} :  \mathcal{H}_{parties} \otimes \mathcal{H}_{in} \otimes \mathcal{H}_{out} \otimes \mathcal{H}_{query} \rightarrow \mathcal{H}_{parties} \otimes \mathcal{H}_{in} \otimes \mathcal{H}_{out} \otimes \mathcal{H}_{query}
\end{equation*}
and the state after the fourth unitary is therefore
\begin{equation*}
\ket{4}^{rw}_{t} =  \displaystyle\sum_{x,s,r,A,a} \alpha_{x,s,A,a}\sqrt{p_{r}}\displaystyle \ket{v_{[n]}(s,r)}_p \ket{s}_{i}\ket{o_{[n]}(s,r)}_{o} \ket{x}_e \ket{A,a_i + s_A,a_o + o_A(s,r),a_v + v_A(s, r)}_q
\end{equation*}\\ 
The adversary receives the response register and must now guess if he's in the real or ideal world. He can do this using the input register, his auxiliary register and the query register. 
To see the adversary's final state we need to trace out the register holding the view of all the parties,
\begin{eqnarray}
\rho^{rw}_{adv} = Tr_p(\proj{4}^{rw}_t) &=&  \displaystyle\sum_{{r, r', s, s'}} \sqrt{p_{r}}\sqrt{p_{r'}} \ket{\psi^{adv}_{r, s}}\bra{\psi^{adv}_{r', s'}}\operator{Tr}(\ket{v_{[n]}(s,r)}_p\bra{v_{[n]}(s',r')}_p \otimes \ket{o_{[n]}(s,r)}_{o} \bra{o_{[n]}(s',r')}_{o}) \\
&=& \displaystyle\sum_{{r, r', s, s'}} \sqrt{p_{r}}\sqrt{p_{r'}}  \ket{\psi^{adv}_{r, s}}\bra{\psi^{adv}_{r', s'}}\bra{v_{[n]}(s,r)}_p\ket{v_{[n]}(s',r')}_p \times \bra{o_{[n]}(s,r)}_{o} \ket{o_{[n]}(s',r')}_{o}) \\
&=& \displaystyle\sum_{{r, s}} p_{r} \proj{\psi^{adv}_{r, s}}
\end{eqnarray} \label{eq:adv_rw_mpc_final}
where $\ket {\psi^{adv}_{r, s}} = \displaystyle\sum_{x, A,a}\alpha_{x,s,A,a}\ket{s}_{i}\ket{x}_e\ket{A,a_i + s_A,a_o + o_A(s,r),a_v + v_A(s, r)}_q$. It is interesting to note that even though the input register was supplied by the adversary, as the parties run the protocol their private state becomes entangled with the input register. This is a register the adversary does not have access to and as a consequence he now sees a mixed state over possible inputs. \\\\

We'll sum up these steps below in Figure \ref{fig:realworld}.
\begin{figure}[htp]
\begin{framed}
Inital state:
\begin{equation*}
\ket{init}^{rw}_{t} =  \displaystyle\sum_{x} \alpha_{x}\displaystyle \ket{0}_p\ket{0}_{i}\ket{0}_{o} \ket{x}_e \ket{0}_q
\end{equation*}
\begin{enumerate}
\item  
\begin{equation*}
U^{adv}_{in} :  \mathcal{H}_{in} \otimes \mathcal{H}_{env} \rightarrow \mathcal{H}_{in} \otimes \mathcal{H}_{env}
\end{equation*}
\begin{equation*}
\ket{1}^{rw}_{t} =  \displaystyle\sum_{x,s} \alpha_{x,s}\displaystyle \ket{0}_p\ket{s}_{i}\ket{0}_{o} \ket{x}_e \ket{0}_q
\end{equation*}\\
\item 
\begin{equation*}
U^{pro}_{run} :  \mathcal{H}_{parties} \otimes \mathcal{H}_{in} \otimes \mathcal{H}_{out} \rightarrow \mathcal{H}_{parties} \otimes \mathcal{H}_{in} \otimes \mathcal{H}_{out}
\end{equation*}
\begin{equation*}
\ket{2}^{rw}_{t} =  \displaystyle\sum_{x,s,r} \alpha_{x,s}\sqrt{p_{r}} \ket{v_{[n]}(s,r)}_p \ket{s}_{i}\ket{o_{[n]}(s,r)}_{o} \ket{x}_e \ket{0}_q
\end{equation*}\\
\item  
\begin{equation*}
U^{adv, F}_{query} :  \mathcal{H}_{env} \otimes \mathcal{H}_{query} \rightarrow \mathcal{H}_{env} \otimes \mathcal{H}_{query}
\end{equation*}
\begin{equation*}
\ket{3}^{rw}_{t} =  \displaystyle\sum_{x,s,r,A,a} \alpha_{x,s,A,a}\sqrt{p_{r}} \ket{v_{[n]}(s,r)}_p \ket{s}_{i}\ket{o_{[n]}(s,r)}_{o} \ket{x}_e \ket{A,a}_q
\end{equation*}\\
\item  
\begin{equation*}
U^{oracle}_{res} :  \mathcal{H}_{parties} \otimes \mathcal{H}_{in} \otimes \mathcal{H}_{out} \otimes \mathcal{H}_{query} \rightarrow \mathcal{H}_{parties} \otimes \mathcal{H}_{in} \otimes \mathcal{H}_{out} \otimes \mathcal{H}_{query}
\end{equation*}
\begin{equation*}
\ket{4}^{rw}_{t} =  \displaystyle\sum_{x,s,r,A,a} \alpha_{x,s,A,a}\sqrt{p_{r}} \ket{v_{[n]}(s,r)}_p \ket{s}_{i}\ket{o_{[n]}(s,r)}_{o} \ket{x}_e \ket{A,a_i + s_A,a_o + o_A(s,r),a_v + v_A(s, r)}_q
\end{equation*}\\
\end{enumerate}
\end{framed}
\small
\caption{Purified run of multiparty computation in the real world with superposition attacks}
 \label{fig:realworld}
\end{figure}

\newpage
\subsection{MPC model in the 'Ideal world'}
Secondly we consider the ideal world, where a simulator is to simulate a real attack for the adversary using only the ideal functionality. The space for this model is,
\begin{equation*}
\mathcal{H}_{total} = \mathcal{H}_{idealF} \otimes \mathcal{H}_{in} \otimes \mathcal{H}_{out} \otimes \mathcal{H}_{sim}   \otimes \mathcal{H}_{env} \otimes \mathcal{H}_{query}
\end{equation*}
$\mathcal{H}_{in}, \mathcal{H}_{out}$ and $\mathcal{H}_{env}$ serve the same purpose as in the real world. \\$\mathcal{H}_{sim}$ denotes the subspace in which the simulator operates. \\$\mathcal{H}_{query}$ is still the register where the adversary constructs his query, but the simulator will be allowed to change this before using it to query the ideal functionality. Finally, we have a space for the ideal functionality, $\mathcal{H}_{idealF}$. This is necessary as we want to purify the random choices made and we need the state of the ideal functionality to be entangled with the input register as the parties would be in the real world. We'll describe the unitaries that differ and refer to the earlier description for those that don't. There will be six unitaries in total. The initial state is,
\begin{equation*}
\ket{init}^{iw}_{t} =  \displaystyle\sum_{x} \alpha_{x}\displaystyle \ket{0}_{if} \displaystyle \ket{0}_i\ket{0}_{o}\ket{0}_{s} \ket{x}_e \ket{0}_q
\end{equation*}
where $\ket{init}_{t} \in \mathcal{H}_{total}$, $\ket{0}_{if} \in \mathcal{H}_{idealF}$, $\ket{0}_{i} \in \mathcal{H}_{in}$, $\ket{0}_{o} \in \mathcal{H}_{out}$, $\ket{0}_{s} \in \mathcal{H}_{sim}$, $\ket{0}_{e} \in \mathcal{H}_{env}$ and $\ket{0}_{q} \in \mathcal{H}_{query}$, as should be expected from the notation. The superscript, $iw$, denotes that it's in the ideal world.
The first unitary is exactly as in the real world and hence,
\begin{equation*}
U^{adv}_{in} :  \mathcal{H}_{in} \otimes \mathcal{H}_{env} \rightarrow \mathcal{H}_{in} \otimes \mathcal{H}_{env}
\end{equation*}
\begin{equation*}
\ket{1}^{iw}_{t} =  \displaystyle\sum_{x,s} \alpha_{x,s} \ket{0}_{if}\ket{s}_{i}\ket{0}_{o}  \ket{0}_s \ket{x}_e \ket{0}_q
\end{equation*}\\
The ideal functionality now entangles itself with the input register and produces the correct outputs in the output register. This is a classical function of each view in superposition. We'll denote this unitary, $U^{ideal}_{out}$,
\begin{equation*}
U^{ideal}_{out} :  \mathcal{H}_{idealF} \otimes \mathcal{H}_{in} \otimes \mathcal{H}_{out} \rightarrow  \mathcal{H}_{idealF} \otimes\mathcal{H}_{in} \otimes \mathcal{H}_{out}
\end{equation*}
Recall that any random choices are purified onto $\mathcal{H}_{idealF}$. The resulting state is therefore,
\begin{equation*}
\ket{2}^{iw}_{t} =  \displaystyle\sum_{x,s,r} \alpha_{x,s} \sqrt{p_{r}}\ket{s,r}_{if}\ket{s}_{i}\ket{o_{[n]}(s,r)}_{o}  \ket{0}_s \ket{x}_e \ket{0}_q
\end{equation*}\\
The adversary constructs his query exactly as earlier,
\begin{equation*}
U^{adv, F}_{query} :  \mathcal{H}_{env} \otimes \mathcal{H}_{query} \rightarrow \mathcal{H}_{env} \otimes \mathcal{H}_{query}
\end{equation*}
\begin{equation*}
\ket{3}^{iw}_{t} =  \displaystyle\sum_{x,s,r,A,a} \alpha_{x,s,A,a}\sqrt{p_{r}} \ket{s,r}_{if} \ket{s}_{i}\ket{o_{[n]}(s,r)}_{o} \ket{0}_s \ket{x}_e \ket{A,a}_q
\end{equation*}\\
Now the simulator gets the query and must construct an appropriate response to the adversary, only using the ideal functionality. First the simulator is allowed to change the adversary's query using an auxiliary register. Only requirement is that it is corruption preserving \footnote{That is, the probability of measuring a specific $A$ is unchanged}. This is an arbitrary unitary operation. We'll denote it, $U^{sim}_{query}$,
\begin{equation*}
U^{sim}_{query} :  \mathcal{H}_{sim}  \otimes \mathcal{H}_{query} \rightarrow \mathcal{H}_{sim}  \otimes \mathcal{H}_{query}
\end{equation*}
\begin{equation*}
\ket{4}^{iw}_{t} =  \displaystyle\sum_{x,z,s,r,A,a} \alpha_{x,z,s,A,a}\sqrt{p_{r}} \ket{s,r}_{if} \ket{s}_{i}\ket{o_{[n]}(s,r)}_{o} \ket{z}_s\ket{x}_e \ket{A,a}_q
\end{equation*}\\
Now the ideal functionality is run on the query from the simulator. This is a classical function that each corrupted subset and each view in the superposition which fills in the input and output into the response register. We'll denote this unitary, $U^{ideal}_{res}$,
\begin{equation*}
U^{ideal}_{res} :  \mathcal{H}_{in}  \otimes  \mathcal{H}_{out}  \otimes \mathcal{H}_{query} \rightarrow  \mathcal{H}_{in}  \otimes  \mathcal{H}_{out}  \otimes \mathcal{H}_{query}
\end{equation*}
and the response register now contains input and output from the corrupted parties, but, of course, not their views.
\begin{equation*}
\ket{5}^{iw}_{t} =  \displaystyle\sum_{x,z,s,r,A,a} \alpha_{x,z,s,A,a}\sqrt{p_{r}} \ket{s,r}_{if} \ket{s}_{i}\ket{o_{[n]}(s,r)}_{o} \ket{z}_s\ket{x}_e \ket{A,a_i + s_A,a_o + o_A(s,r),a_v}_q
\end{equation*}\\
Next the simulator must try to simulate the response the adversary got in the real world. We'll denote this unitary, $U^{sim}_{res}$,
\begin{equation*}
U^{sim}_{res} :  \mathcal{H}_{sim}  \otimes \mathcal{H}_{query} \rightarrow \mathcal{H}_{sim}  \otimes \mathcal{H}_{query}.
\end{equation*}
The resulting state is
\begin{equation*}
\ket{6}^{iw}_{t} =  \displaystyle\sum_{x,z,s,r,A,a} \alpha'_{x,z,s,A,a}\sqrt{p_{r}} \ket{s,r}_{if} \ket{s}_{i}\ket{o_{[n]}(s,r)}_{o} \ket{z}_s\ket{x}_e \ket{A,\psi_{z, s, A, a}}_q.
\end{equation*}\\	
Finally, to see what state the adversary sees we must trace out the ideal functionality, the output and the simulator registers,
\begin{eqnarray*}
\rho^{iw}_{adv} = Tr_{if,o,s}(\proj{6}^{iw}_t) &=& \displaystyle\sum_{{r},r', s, s',z,z'} p_{r} \ket{\psi^{adv}_{r,s, z}}\bra{\psi^{adv}_{r', s',z'}} tr\left(\ket{s,r}_{if}\bra{s',r'}_{if} \otimes \ket{o_{[n]}(s,r)}_{o} \bra{o_{[n]}(s',r')}_{o} \otimes \ket{z}_s \bra{z'}_s\right)\\
&=& \displaystyle\sum_{{r},r', s, s',z,z'} p_{r} \ket{\psi^{adv}_{r,s, z}}\bra{\psi^{adv}_{r', s',z'}} \left(\bra{s,r}_{if} \ket{s',r'}_{if} \times \bra{o_{[n]}(s,r)}_{o} \ket{o_{[n]}(s',r')}_{o} \times \bra{z}_s \ket{z'}_s\right)\\
&=& \displaystyle\sum_{{r}, s,z} p_{r} \ket{\psi^{adv}_{r,s, z}}\bra{\psi^{adv}_{r, s,z}}
\end{eqnarray*} \label{eq:adv_iw_mpc_final}
where $\ket {\psi^{adv}_{r,s,z}} = \displaystyle\sum_{x,A,a}\alpha_{x,z,s,A,a}\ket{s}_{i}\ket{x}_e\ket{A,\psi_{z, s, A, a}}_q$.
Again, we will sum up the steps below in Figure \ref{fig:idealworld}. 
\begin{figure}[htp]
\begin{framed}
Inital state:
\begin{equation*}
\ket{init}^{iw}_{t} =  \displaystyle\sum_{x} \alpha_{x}\displaystyle \ket{0}_{if} \displaystyle \ket{0}_i\ket{0}_{o}\ket{0}_{s} \ket{x}_e \ket{0}_q
\end{equation*}

\begin{enumerate}
\item  
\begin{equation*}
U^{adv}_{in} :  \mathcal{H}_{in} \otimes \mathcal{H}_{env} \rightarrow \mathcal{H}_{in} \otimes \mathcal{H}_{env}
\end{equation*}
\begin{equation*}
\ket{1}^{iw}_{t} =  \displaystyle\sum_{x,s} \alpha_{x,s} \ket{0}_{if}\ket{s}_{i}\ket{0}_{o}  \ket{0}_s \ket{x}_e \ket{0}_q
\end{equation*}\\
\item 
\begin{equation*}
U^{ideal}_{out} :  \mathcal{H}_{idealF} \otimes \mathcal{H}_{in} \otimes \mathcal{H}_{out} \rightarrow  \mathcal{H}_{idealF} \otimes\mathcal{H}_{in} \otimes \mathcal{H}_{out}
\end{equation*}
\begin{equation*}
\ket{2}^{iw}_{t} =  \displaystyle\sum_{x,s,r} \alpha_{x,s}\sqrt{p_{r}} \ket{s,r}_{if}\ket{s}_{i}\ket{o_{[n]}(s,r)}_{o}  \ket{0}_s \ket{x}_e \ket{0}_q
\end{equation*}\\
\item  
\begin{equation*}
U^{adv, F}_{query} :  \mathcal{H}_{env} \otimes \mathcal{H}_{query} \rightarrow \mathcal{H}_{env} \otimes \mathcal{H}_{query}
\end{equation*}
\begin{equation*}
\ket{3}^{iw}_{t} =  \displaystyle\sum_{x,s,r,A,a} \alpha_{x,s,A,a}\sqrt{p_{r}} \ket{s,r}_{if} \ket{s}_{i}\ket{o_{[n]}(s,r)}_{o} \ket{0}_s \ket{x}_e \ket{A,a}_q
\end{equation*}\\
\item  
\begin{equation*}
U^{sim}_{query} :  \mathcal{H}_{sim}  \otimes \mathcal{H}_{query} \rightarrow \mathcal{H}_{sim}  \otimes \mathcal{H}_{query}
\end{equation*}
\begin{equation*}
\ket{4}^{iw}_{t} =  \displaystyle\sum_{x,z,s,r,A,a} \alpha_{x,z,s,A,a}\sqrt{p_{r}} \ket{s,r}_{if} \ket{s}_{i}\ket{o_{[n]}(s,r)}_{o} \ket{z}_s\ket{x}_e \ket{A,a}_q
\end{equation*}\\
\item  
\begin{equation*}
U^{ideal}_{res} :  \mathcal{H}_{in}  \otimes  \mathcal{H}_{out}  \otimes \mathcal{H}_{query} \rightarrow  \mathcal{H}_{in}  \otimes  \mathcal{H}_{out}  \otimes \mathcal{H}_{query}
\end{equation*}
\begin{equation*}
\ket{5}^{iw}_{t} =  \displaystyle\sum_{x,z,s,r,A,a} \alpha_{x,z,s,A,a}\sqrt{p_{r}} \ket{s,r}_{if} \ket{s}_{i}\ket{o_{[n]}(s,r)}_{o} \ket{z}_s\ket{x}_e \ket{A,a_i + s_A,a_o + o_A(s,r),a_v}_q
\end{equation*}\\
\item 
\begin{equation*}
U^{sim}_{res} :  \mathcal{H}_{sim}  \otimes \mathcal{H}_{query} \rightarrow \mathcal{H}_{sim}  \otimes \mathcal{H}_{query}
\end{equation*}
\begin{equation*}
\ket{6}^{iw}_{t} =  \displaystyle\sum_{x,z,s,r,A,a} \alpha'_{x,z,s,A,a}\sqrt{p_{r}} \ket{s,r}_{if} \ket{s}_{i}\ket{o_{[n]}(s,r)}_{o} \ket{z}_s\ket{x}_e \ket{A,\psi_{z, s, A, a}}_q
\end{equation*}\\	
\end{enumerate}
\end{framed}
\small
\caption{Purified run of multiparty computation in the ideal world with superposition attacks}
\label{fig:idealworld}
\end{figure}

A MPC protocol is defined by the operator $U^{pro}_{run}$ (which in turn also defines $U^{ideal}_{out}$). A specific adversay is defined by the initial state $\displaystyle\sum_x \alpha_a \ket{x}$ and the two unitary operators, $U^{adv}_{in}, U^{adv, F}_{query}$. Finally, a simulator is defined by the two unitary operators, $\{ U^{sim}_{query}, U^{sim}_{res}\}$. 
This allows for the following definition.
\begin{definition}
A pair of unitary operators, $\{ U^{sim}_{query}, U^{sim}_{res}\}$, are a perfect black-box simulator for the MPC protocol defined by $U^{pro}_{run}$ with adversary structure, F, if, and only if, for all auxiliary inputs, $\displaystyle\sum_x \alpha_x \ket{x}$, and all operators $U^{adv}_{in}, U^{adv, F}_{query}$,
\begin{equation*}
 \rho^{rw}_{adv} = Tr_{p,o}(\proj{4}^{rw}_t) = \rho^{iw}_{adv} = Tr_{if,o,s}(\proj{6}^{iw}_t)
\end{equation*}
where 
\begin{eqnarray*}
\ket{4}^{rw}_t &=& U^{oracle}_{res} U^{adv, F}_{query} U^{pro}_{run} U^{adv}_{in} \ket{init}^{rw}_{t} \\
\ket{6}^{iw}_t &=& U^{sim}_{res} U^{ideal}_{res}U^{sim}_{query}U^{adv, F}_{query}U^{ideal}_{out}U^{adv}_{in} \ket{init}^{iw}_{t}
\end{eqnarray*}

where we for readability assume that the operators are padded with appropriate identities on the subspaces they don't operate.
\end{definition}
\subsection{No simulator for general MPC if $U^{sim}_{query}$ is unitary}\label{subsec:NoUniSim}
Consider a simple MPC protocol with a single dealer that deals a secret $s\in \{ 0,1\}$ to a number of parties.  Denote this dealer as the party, $P_1$. The adversary will not need an auxiliary register in the following and will therefore be left out of the equations. Let the adversary make a query, defined by $U^{adv, F}_{query} : \mathcal{H}_{query} \rightarrow \mathcal{H}_{query}$, that is only of the dealer and one more party where he puts the response register for the input in perfect superposition. The complete state after applying $U^{adv, F}_{query}$ on the query register is,
\begin{equation*}
\ket{3}^{rw}_{t} =  \displaystyle\sum_{r,a_i}\frac{1}{\sqrt{|\mathbb S|}}\sqrt{p_{r}} \ket{v_{[n]}(s,r)}_p \ket{s}_{i}\ket{o_{[n]}(s,r)}_{o}  \ket{A,a_i,0,0}_q
\end{equation*}\\ 
where $A = \{ 1,2\}$
The state after applying the oracle is
\begin{equation*}
\ket{4}^{rw}_{t} =  \displaystyle\sum_{r,a_i} \frac{1}{\sqrt{|\mathbb S|}}\sqrt{p_{r}} \ket{v_{[n]}(s,r)}_p \ket{s}_{i}\ket{o_{[n]}(s,r)}_{o} \ket{A,a_i + s_A,0,v_A(s,r)}_q
\end{equation*}\\ 
If we look at the final part of the query register ($v_A(s,r)$) we can see that it is in a classical state and contains the view of the dealer and one other party, that is, the randomness and one secret share. This uniquely defines the secret and hence the query register must be orthogonal for two different secrets.\\ 
In the ideal world assume, for the time being, that $U^{sim}_{query}$ is the identity. The state after the oracle for the ideal functionality is applied is then,
\begin{equation*}
\ket{5}^{iw}_{t} =  \displaystyle\sum_{r,a_i} \frac{1}{\sqrt{|\mathbb S|}}\sqrt{p_{r}} \ket{s,r}_{if} \ket{s}_{i}\ket{o_{[n]}(s,r)}_{o}\ket{0}_s  \ket{A,a_i + s_A,0,0}_q
\end{equation*}
Since this is in perfect superposition over all values of $a_i$ we can conclude that,
\begin{eqnarray*}
\ket{5}^{iw}_{t} &=&  \displaystyle\sum_{r,a_i} \frac{1}{\sqrt{|\mathbb S|}}\sqrt{p_{r}} \ket{s,r}_{if} \ket{s}_{i}\ket{o_{[n]}(s,r)}_{o}\ket{0}_s \ket{A,a_i + s_A,0,0}_q \\
&=&  \displaystyle\sum_{r,a_i} \frac{1}{\sqrt{|\mathbb S|}}\sqrt{p_{r}} \ket{s,r}_{if} \ket{s}_{i}\ket{o_{[n]}(s,r)}_{o}\ket{0}_s  \ket{A,a_i,0,0}_q
\end{eqnarray*}
and the state the simulator sees is therefore independent of the secret. There's hence no way it can produce two orthogonal states depending on the secret and hence no way to simulate. For the case of a different simulator where $U^{sim}_{query}$ is not the identity, but some unitary operation on $\mathcal{H}_{query}$, there exists a different adversary that applies the unitary, $\tilde{U}^{adv}_{query} = U^{adv, F}_{query}(U^{sim}_{query})^{-1}$ instead. The register the simulator sends to the oracle of the ideal functionality is now exactly the same as above, and the same argumentation can now be repeated. It follows that for any such simulator there exists an adversary that cannot be simulated.
  \subsection{No simulator for quantum attacks by running classical simulator in superposition} \label{subsec:NoClasSup}

Unfortunately, it is not completely clear how to design a good quantum
simulator, even assuming restrictions on $F$ as in Theorem \ref{thm:secureSS} for secret sharing and in the setting with created response registers. A natural first
attempt would be to use a classical simulator (which we can assume
exists) and produce a superposition of what it produces on those
corrupted subsets that occur in the query. We can show, however, that
this cannot work in general.

Let us first make clear what we mean by running a classical simulator
in superposition: consider a classical machine $S$ which gets as input
parties subset $A$, the inputs and outputs of those parties $(s_A,o_A(s))$ and a random string $c$. It then
outputs $S(A,s_A, o_A(s),c)$. Running $S$ in superposition now means that on
input $\ket {\psi^{sim}_{x,s}} = \displaystyle\sum_{ A \in F,c}\alpha_{x,s,A}\ket{0}_s\ket{A, s_A,o_A(s),0}_q$, we
output $\displaystyle\sum_{ A\in F}\alpha_{x,s,A}\sqrt{p_c}\ket{c}_s\ket{A, s_A,o_A(s),S(A,s_A,o_A(s),c)}_q$.  This means that the
state returned to the adversary will be
$$
\sum_{A,A'\in F}\alpha_{A}\alpha^*_{A'}\ket{A}\bra{A'}\otimes\ket{s_A, o_A(s)}\bra{s_{A'}, o_{A'}(s)}\otimes
         \sum_c p_c\ket{S(A,s_A, o_A(s),c)}\bra{S(A',s_{A'}, o_{A'}(s),c)}
$$

Now consider the following simple example protocol: we have 4 parties
$P_0, P_1,P_2, P_3$. Player $P_0$ gets as input a bit $s$. He will
then secret share it additively among the other parties: he chooses
bits $r_1, r_2$ at random and sends $r_1$ to $P_1$, $r_2$ to $P_2$ and
$s\oplus r_1 \oplus r_2$ to $P_4$. There is no output defined for
anyone.  Clearly, this protocol is perfectly secure against a
classical attack where at most 2 parties are corrupted. One might
therefore hope that it would be perfectly secure against quantum
attacks using superpositions of sets containing only 1 party.

However, we now argue that such security cannot be shown by running 
any classical simulator $S$ in superposition.
To this end, consider the case where we corrupt all parties in equally weighted superposition.
This will mean that the simulator has to start from the input state. 
$$\frac{1}{2}( \ket{P_0}\ket{s}\ket{0} + \sum_{i=1}^3\ket{P_i}\ket{\bot}\ket{0})$$
The state has this form since the input and output for $P_0$ is just $s$
and the other parties have no input or output. The state returned will be
$$
\rho_{sim,s}= \sum_{A,A'\in \{ \{P_0\} , \{ P_1\}, \{P_2\} , \{ P_3\}  \}}\frac{1}{2}\ket{A}\bra{A'}\otimes\ket{s_A}\bra{s_{A'}}
         \otimes \sum_c p_c\ket{S(A,s_A,c)}\bra{S(A',s_{A'},c)}
$$
If we define $s_{\{ P_i\}} = s_i$ is some secret $s$ if $i=0$ and $\bot$ otherwise, and
index with $P_i,P_{i'}$ instead of $A,A'$, we can write the state a bit more 
conveniently as:
$$
\rho_{sim,s}= \sum_{i,i'}\frac{1}{2}\ket{P_i}\bra{P_{i'}}\otimes\ket{s_i}\bra{s_{i'}}
         \otimes \sum_c p_c\ket{S(i,s_i,c)}\bra{S(i',s_{i'},c)}
$$

On the other hand, we can compute the state $\rho_{real,s}$
that would be returned from a real attack. Define $v_i(s,r_1,r_2)$ to be
the view of the protocol for $P_i$, defined as a 2-bit register. Thus
$$v_0(s,r_1,r_2)= (r_1,r_2), v_1(s,r_1,r_2)= (r_1,0), v_2(s,r_1,r_2)= (r_2,0),
v_3(s,r_1,r_2)= (r_1\oplus r_2\oplus s ,0).$$
This means that the state returned for a particular choice of $s, r_1, r_2$ is
$$
\ket{\Psi_{s,r_1,r_2}} = \frac{1}{2} \sum_{i=0}^3 \ket{P_i}\ket{s_i}\ket{v_i(s,r_1,r_2)}
$$
For fixed $s$, each choice of $r_1,r_2$ occurs with probability $1/4$, so
$$
\rho_{real,s} = \sum_{r_1,r_2} \frac{1}{4} \ket{\Psi_{s,r_1,r_2}}\bra{\Psi_{s,r_1,r_2}}
= \frac{1}{16}\sum_{i,i'}\ket{P_i}\bra{P_{i'}} \ket{s_i}\bra{s_{i'}}
\sum_{r_1,r_2}\ket{v_i(s,r_1,r_2)}\bra{v_{i'}(s,r_1,r_2)}
$$
Consider the part of $\rho_{sim,s}$ that corresponds to $P_i= P_0$
and $P_{i'}= P_1$. Then, since we assume perfect simulation, i.e., $\rho_{sim,s}= \rho_{real,s}$,
for any $c$, we must have $S(P_0,s_A, o_A(s),c)= S(0, s, c) = (r_1,r_2)$
and $S(P_{i'},s_{A'}, o_{A'}(s),c)= S(1, \bot, c) = (r_1,0)$, where the {\em same} $r_1$ occurs
in both strings, since in a real execution, $P_1$ would of course receive the same bit
that $P_0$ sent. We get an exactly similar conclusion for $P_{i'}= P_2$, and for 
$P_{i'}= P_3$, we can conclude that $S(3,\bot, c)= (s\oplus r_1\oplus r_2, 0)$.

But now note that, we can simply compute
$S(i,\bot,c)$ for $i=1,2,3$ and some fixed $c$. Then the above shows
that if running $S$ in superposition was a perfect quantum simulator, we could compute
$(s,0)= S(1,\bot,c)\oplus S(2,\bot,c)\oplus S(3,\bot,c)$, without any information
on $s$, which is of course a contradiction.
\subsection{Limited simulators}
The result in \ref{subsec:NoUniSim} strongly suggests that it's impossible to construct a simulator for general MPC protocols in the setting with supplied response registers. In this section we will discuss the problem of constructing a simulator for the model with created response registers and protocols for deterministic functions. That is, the output does not depend on the chosen randomness. While simulators in general consists of two operators, $\{ U^{sim}_{query}, U^{sim}_{res}\}$, we will restrict $U^{query}_{sim}$ to be the identity. That is, the query from the adversary is sent directly to the oracle. We can do this wlog because there's no values in the response register and $U^{query}_{sim}$ must be corruption preserving.

Assume all randomness is uniformly chosen. This can be done wlog and helps to unclutter the notation. 
For a specific protocol, define the matrix (or vector of states) $M(A,s)$ as 
\begin{equation*}
M(A,s) = (\ket{A,v_A(s,0)}, \dots , \ket{A,v_A(s,r)}, \dots, \ket{A,v_A(s,|\mathcal{R}|-1)}) 
\end{equation*}
We can now express the security of the protocol in the following way
\begin{lemma}\label{lem:secureMPC}
A multiparty computation protocol for a deterministic function is perfectly secure against quantum $F$-attacks w, and only if, there exists a set of $|\mathcal{R}|\times |\mathcal{R}|$ unitary matrices $\{U_s\}_{s\in \mathbb{S}}$ such that for all $s,s'\in \mathbb{S},r\in\mathcal{R},A \in F_{s,s'}$ where $F_{s,s'} = \{ A \in F | s_A = s'_A \wedge o_A(s) = o_A(s')\}$
\begin{eqnarray*}
M(A,s) U_s = M(A,s') U_{s'}
\end{eqnarray*}
\end{lemma}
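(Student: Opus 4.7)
The plan is to compare the real-world and ideal-world reduced adversary states branch-by-branch in the query, using Uhlmann-style purification arguments to pin down the required unitaries $\{U_s\}$. Throughout I assume the randomness is uniform on $\mathcal{R}$ and, as in the paragraphs preceding the lemma, take $U^{sim}_{query}=\mathrm{id}$ since we are in the created-response-register setting.

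For the ``only if'' direction, assume a perfect simulator $U^{sim}_{res}$ exists. Acting on a basis state $\ket{A,\sigma,\omega,0_{view},0_{sim}}$ it produces some pure state $\ket{A,\sigma,\omega}\otimes\ket{\phi_{A,\sigma,\omega}}$. First, I fix a classical input $s$ and a query supported on a single subset $A$; perfect simulation forces $\ket{\phi_{A,s_A,o_A(s)}}$ to be a purification of the real-world reduced state $\frac{1}{|\mathcal{R}|}\sum_r\proj{v_A(s,r)}$, so I can write it as $\frac{1}{\sqrt{|\mathcal{R}|}}\sum_c\ket{c}_{sim}\sum_r(U_{A,s})_{r,c}\ket{v_A(s,r)}$ for some unitary $U_{A,s}$ on an $|\mathcal{R}|$-dimensional sim register. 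I then apply two further reductions. (i) The simulator sees only $(A,\sigma,\omega)$, so whenever $A\in F_{s,s'}$ the same $\ket{\phi_{A,\sigma,\omega}}$ must serve as a purification for both inputs; comparing columns yields $M(A,s)U_{A,s}=M(A,s')U_{A,s'}$. (ii) Putting two subsets $A_0,A_1$ in superposition in the query and matching the real-world off-diagonal block $\frac{1}{|\mathcal{R}|}\sum_r\ket{v_{A_0}(s,r)}\bra{v_{A_1}(s,r)}$ against the simulator's cross-term $\frac{1}{|\mathcal{R}|}\sum_{r,r'}[U_{A_0,s}U_{A_1,s}^\dagger]_{r,r'}\ket{v_{A_0}(s,r)}\bra{v_{A_1}(s,r')}$ forces $U_{A_0,s}U_{A_1,s}^\dagger$ to act as the identity on the relevant subspace, so $U_{A,s}$ is independent of $A$; call it $U_s$. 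Combining (i) and (ii) gives the stated condition.

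For the ``if'' direction, given $\{U_s\}$ I define $U^{sim}_{res}$ branch-wise by
\[
\ket{A,\sigma,\omega,0_{view},0_{sim}}\mapsto\ket{A,\sigma,\omega}\otimes\frac{1}{\sqrt{|\mathcal{R}|}}\sum_c\ket{c}_{sim}\sum_r(U_s)_{r,c}\ket{v_A(s,r)}
\]
for any $s$ consistent with $(\sigma,\omega)$; the hypothesis is precisely what makes this well-defined across the residual ambiguity $A\in F_{s,s'}$, and on triples unreachable from any $s$ I extend $U^{sim}_{res}$ arbitrarily. A direct computation of $\rho^{iw}_{adv}$ that traces out the ideal-functionality, output, and sim registers uses $\sum_c(U_s)_{r,c}(U_s)_{r',c}^*=\delta_{r,r'}$ to collapse the sim contraction, which reproduces $\rho^{rw}_{adv}$ exactly.

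The main obstacle I anticipate is passing from the per-subset purification unitaries $U_{A,s}$ (which Uhlmann delivers one $A$ at a time, a priori with different dimensions and different phases) to a single $U_s$ shared across all $A$. This rigidity is forced entirely by the requirement of preserving interference between distinct corruption-branches in a superposition query, and is what gives the lemma its content.
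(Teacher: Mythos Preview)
Your argument is sound and reaches the same endpoint as the paper, but it is organized differently and your step (ii) in the ``only if'' direction is under-argued.

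The paper never decomposes by $A$. It applies the ensemble-equivalence/purification argument \emph{once}, to the full simulator input $\sum_A \alpha_{x,s,A}\ket{0}_s\ket{A,s_A,o_A(s),0}$, obtaining a single unitary $U_{x,s}$ directly; a linearity-in-$x$ observation then drops the $x$-dependence. The paper then asks when the resulting map $\ket{\psi^{sim}_{x,s}}\mapsto \frac{1}{\sqrt{|\mathcal{R}|}}\sum_{i,k}[U_s]_{i,k}\ket{k}_s\ket{A,s_A,o_A(s),v_A(s,i)}$ extends to a unitary $U^{sim}_{res}$, which is an inner-product-preservation calculation that yields $M(A,s)U_s=M(A,s')U_{s'}$ on $F_{s,s'}$. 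Your ``if'' direction and the paper's forward verification are essentially the same computation (the $\sum_c[U_s]_{r,c}[U_s]^*_{r',c}=\delta_{r,r'}$ collapse).

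Your ``only if'' route---first extracting per-subset unitaries $U_{A,s}$ from single-$A$ queries, then gluing via two-branch cross terms---is legitimate, but step (ii) as written does not close. The cross-term identity gives only
\[
M(A_0,s)\,U_{A_0,s}U_{A_1,s}^{\dagger}\,M(A_1,s)^{\dagger}=M(A_0,s)M(A_1,s)^{\dagger},
\]
which does \emph{not} force $U_{A_0,s}=U_{A_1,s}$: each $U_{A,s}$ is determined only modulo the kernel of $M(A,s)$, so ``acts as the identity on the relevant subspace'' is not the same as ``is independent of $A$''. What you actually have is that the fixed Schmidt matrices $N_A$ of the simulator outputs satisfy $N_{A}N_{A'}^{\dagger}=M(A,s)M(A',s)^{\dagger}$ for all $A,A'$; stacking over $A$ this reads $\mathbf{N}_s\mathbf{N}_s^{\dagger}=\mathbf{M}_s\mathbf{M}_s^{\dagger}$, and one then needs the linear-algebra fact that $XX^{\dagger}=YY^{\dagger}$ for rectangular $X,Y$ with $|\mathcal{R}|$ columns implies $X=YU$ for some $|\mathcal{R}|\times|\mathcal{R}|$ unitary $U$ (via SVD). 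That is the missing step that produces a single $U_s$. You flagged exactly this obstacle; the paper simply sidesteps it by never splitting the purification by $A$.
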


\begin{proof}
Consider the final state the adversary sees (using the output is independent of randomness and only 0 in the response registers),
\begin{equation*}
\rho^{rw}_{adv} = Tr_p(\proj{4}^{rw}_t) = \displaystyle\sum_{{r \in \mathcal{R}}} \prob \proj{\psi^{adv}_{r}}
\end{equation*} 
where $\ket {\psi^{adv}_{r}} = \displaystyle\sum_{x,s,A}\alpha_{x,s,A}\ket{s}_{i}\ket{x}_e\ket{A, s_A,o_A(s),v_A(s, r)}_q$
and the state the simulator sees after the oracle has been applied in the ideal world,
\begin{equation*}
\rho^{iw}_{sim} = Tr_{if,e,i}(\proj{5}^{rw}_t) = \displaystyle\sum_{x,s} \proj{\psi^{sim}_{x,s}}
\end{equation*} 
where $\ket {\psi^{sim}_{x,s}} = \displaystyle\sum_{\textbf A}\alpha_{x,s,A}\ket{0}_s\ket{A, s_A,o_A(s),0}_q$. The state the simulator must sent back is,
\begin{equation*}
\displaystyle\sum_{x,r,s}\prob \proj{\psi_{x,r,s}}
\end{equation*}
where
\begin{equation*}
\ket{\psi_{r,s}}= \displaystyle\sum_{A} \alpha_{x,s,A} \ket{A, s_A, o_A(s),v_A(s,r)}_q 
\end{equation*}
 To continue we need a claim that is almost equivalent to Theorem 4 in ~\cite{CJW03}, but changed slightly for our specific purposes. For this reason we'll also provide a separate proof. The reader is encouraged to read ~\cite{CJW03} for additional information on the existence of transformations between sets of quantum states.
\begin{claim} 

There exists a perfect simulator, i.e. $\rho^{rw}_{adv} = \rho^{iw}_{adv}$, iff there exist unitary operator, $U^{sim}_{res}$ such that for all $x$, $s\in \mathbb{S}$
\begin{equation}
	U^{sim}_{res} \ket {\psi^{sim}_{x,s}} = U^{sim}_{res} \displaystyle\sum_{ A}\alpha_{x,s,A}\ket{0}_s\ket{A, s_A,o_A(s),0}_q =  \rootprob\displaystyle\sum_{A}\alpha_{x,s,A}\displaystyle\sum_{i,k}[U_s]_{i,k}\ket{k}_s \ket{A, s_A, o_A(s),v_A(s,i)}_q  \label{eq:wellformed}
\end{equation}
where $[U_s]_{i,k}$ is the $i,k$ index of a unitary matrix in some set of unitary matrices, $\{U_{s}\}_{s\in \mathbb{S}}$.
\end{claim}
\begin{proof}
For the forward direction, consider the final state the adversary sees
\begin{equation*}
\rho^{iw}_{adv} = Tr_{if,s}(\proj{6}^{iw}_t) = \displaystyle\sum_{k,s}\proj{\psi^{adv}_{k,s}}
\end{equation*}
where $\ket {\psi^{adv}_{k,s}} = \displaystyle\sum_{x,A}\alpha_{x,s,A}\ket{s}_{i}\ket{x}_e \left(\displaystyle\sum_{i}[U_s]_{i,k}\rootprob \ket{A, s_A, o_A(s),v_A(s,i)}_q\right)$.
Note that $\forall s,A,A',x,x'$,
\begin{eqnarray} \label{eq:qmagic}
\prob \displaystyle\sum_{i, j,k'}[U^*_s]_{i,k}[U_{s}]_{j,k} \ket{A, s_A, o_A(s),v_A(s,i)} \bra{A', s_{A'}, o_{A'}(s),v_{A'}(s,j)} \\
=\prob  \displaystyle\sum_{i,k}[U^*_s]_{i,k}[U_s]_{i,k} \ket{A, s_A, o_A(s),v_A(s,i)} \bra{A', s_{A'}, o_{A'}(s),v_{A'}(s,i)} \nonumber \\ 
= \prob \displaystyle\sum_{i}p_i \ket{A, s_A, o_A(s),v_A(s,i)} \bra{A', s_{A'}, o_{A'}(s),v_{A'}(s,i)} \nonumber
\end{eqnarray} 
where we used that $[U_s]_{i,k}$ is unitary and only depends on s. This means the adversary sees the state he expects (except for labels) and we've completed the proof for the forward direction. \\
Conversely, assume for contradiction that there is a simulator that can constructs the correct mixed state for the adversary, but none that can construct one of the form of equation (\ref{eq:wellformed}). If $\displaystyle\sum_{x,k,s} {p_k} \proj{\phi_{x,k,s}}$ is the state the simulator sents back, then it must be that $\forall x, s:$
\begin{equation*}
\displaystyle\sum_{k} {p_k} \proj{\phi_{x,k,s}} = \displaystyle\sum_{r \in \mathcal{R}}\prob \proj{\psi_{x,r,s}}
\end{equation*}
where
\begin{equation*}
\ket{\psi_{r,s}}= \displaystyle\sum_{A} \alpha_{x,s,A} \ket{A, s_A, o_A(s),v_A(s,r)}_q 
\end{equation*}
This is true if, and only if, there exists unitary matrices, $\{U_{x,s}\}_{x,s}$, such that,
\begin{eqnarray*}
\forall k: \sqrt{p_k} \ket{\phi_{x,k,s}} = \rootprob \displaystyle\sum_{i} [U_{x,s}]_{i,k} \ket{\psi_{x,i,s}}.
\end{eqnarray*}
By purifying the state we see we get,
\begin{equation*}
\displaystyle\sum_k\sqrt{p_k}\ket{k} \ket{\phi_{x,k,s}} = \rootprob \displaystyle\sum_{i,k} [U_{x,s}]_{i,k} \ket{k} \ket{\psi_{x,i,s}}\\
= \rootprob\displaystyle\sum_{A}\alpha_{x,s,A} \displaystyle\sum_{i,k} [U_{x,s}]_{i,k}\ket{k}_s  \ket{A, s_A, o_A(s),v_A(s,i)}_q.
\end{equation*}
Note that we can choose the purification because all purifications are unitarily equivalent. And since the value of $x$ only effects the amplitude we can, by linearity, assume that $U_{x,s}$ does not depend on $x$. This shows that the state produced by any perfect simulator must be of that form. This is a contradiction and completes the proof. \qed
\end{proof}

$U^{sim}_{res}$ exists (and is unitary) if it preserves inner product between all possible states, hence we can conclude that there exists a perfect simulator if there exists a set of unitary matrices, $\{U_s\}_{s\in \mathbb{S}}$ such that for all $x,x', s, s'\in \mathbb S$ and all queries $\{ \alpha_{x,s,A} \}, \{ \alpha'_{x',s',A}\}$
\begin{eqnarray*}
( \displaystyle\sum_{A}\alpha^*_{x,s,A}\bra{0}_s\bra{A, s_A,o_A(s),0}_q) (\displaystyle\sum_{A'}\alpha'_{x',s',A'}\ket{0}_s\ket{A', s'_{A'},o_{A'}(s'),0}_q) \\
=\prob \left(\displaystyle\sum_{A}\alpha^*_{x,s,A}\displaystyle\sum_{i,k}[U^*_s]_{i,k} \bra{k}_s \bra{A, s_A, o_A(s),v_A(s,i)}_q\right)\\\left(\displaystyle\sum_{A'}\alpha'_{x',s',A'}\displaystyle\sum_{j,k'}[U_{s'}]_{j,k'} \ket{k'}_s \ket{A', s'_{A'}, o_{A'}(s'),v_{A'}(s',j)}_q\right) \\
\end{eqnarray*}
For the LHS we have that, 
\begin{eqnarray*}
( \displaystyle\sum_{A}\alpha^*_{x,s,A}\bra{0}_s\bra{A, s_A,o_A(s),0}_q) (\displaystyle\sum_{A'}\alpha'_{x',s',A'}\ket{0}_s\ket{A', s'_{A'},o_{A'}(s'),0}_q) \\
= \displaystyle\sum_{A}\alpha^*_{x,s,A}\alpha'_{x',s',A}\bra{s_A,o_A(s)}\ket{s'_A,o_A(s')} = \displaystyle\sum_{A \in F_{s,s'}}\alpha^*_{x,s,A}\alpha'_{x',s',A}
\end{eqnarray*}
For the RHS we get that,
\begin{eqnarray*}
\prob\left(\displaystyle\sum_{A}\alpha^*_{x,s,A}\displaystyle\sum_{i,k}[U^*_s]_{i,k} \bra{k}_s \bra{A, s_A, o_A(s),v_A(s,i)}_q\right)\\\left(\displaystyle\sum_{A'}\alpha'_{x',s',A'}\displaystyle\sum_{j,k'}[U_{s'}]_{j,k'} \ket{k'}_s \ket{A', s'_{A'}, o_{A'}(s'),v_{A'}(s',j)}_q\right) \\
=\prob \displaystyle\sum_{A,k,i,j}\alpha^*_{x,s,A}\alpha'_{x',s',A}[U^*_s]_{i,k}[U_{s'}]_{j,k}\bra{s_A, o_A(s),v_A(s,i)}\ket{s'_A, o_A(s'),v_A(s',j)} \\
=\prob \displaystyle\sum_{s, s', A,k}\alpha^*_{x,s,A}\alpha'_{x',s',A}\bra{s_A,o_A(s)}\ket{s'_A,o_A(s')} \times \left(\displaystyle\sum_{i} [U^*_s]_{i,k} \bra{A, v_A(s,i)}\right) \left(\displaystyle\sum_{j}[U_{s'}]_{j,k}\ket{A, v_A(s',j)}\right) 
\end{eqnarray*}	
As the state in equation (\ref{eq:wellformed}) must be normalized it follows that the maximum value of
$$\prob \left(\displaystyle\sum_{i} [U^*_s]_{i,k} \bra{A, v_A(s,i)}\right) \left(\displaystyle\sum_{j}[U_{s'}]_{j,k}\ket{A, v_A(s',j)}\right)$$ is $\prob$, so for the entire sum is only one if, and only if, $$\displaystyle\sum_{i} [U_s]_{i,k} \ket{A, v_A(s,i)} = \displaystyle\sum_{j}[U_{s'}]_{j,k}\ket{A, v_A(s',j)}.$$ Hence we see that, there exists a simulator iff there exist $|\mathcal{R}|\times |\mathcal{R}|$ unitary matrices $U_s$ such that for all $s, s'\in \mathbb{S}, k\in\mathcal{R},A \in F_{s,s'}$
\begin{eqnarray}
\displaystyle\sum_{i} [U_s]_{i,k}\ket{A, v_A(s,i)} =\displaystyle\sum_{j} [U_{s'}]_{j,k} \ket{A, v_A(s',j)} \label{eq:secMPCForEachr}
\end{eqnarray}

Using the definition of $M(A,s)$ we can now write the requirement in the more convenient way: There exists a simulator for the MPC protocol if, and only if, there exist a set of unitary matrices, $\{U_s\}_{s\in \mathbb{S}}$, such that for all $s,s'\in \mathbb{S}, A\in F_{s,s'}:$
\begin{equation*}
M(A,s) U_{s} = M(A,s') U_{s'}
\end{equation*}
which completes the proof of Lemma \ref{lem:secureMPC}. \qed
\end{proof}
We'd like to warn the reader that it is important not to confuse the unitary matrices, $\{U_s\}_{s\in \mathbb{S}}$, with the simulator itself. They are merely a tool to show the existence of one. Also, at this point the reader would be excused for lacking any intuition on why the lemma is reasonable. We therefore find it instructive to consider two examples. First we'll reconsider secret sharing in this framework and secondly a simple MPC protocol. 
\subsection{Secret sharing example}
For this example we'll show that all secret sharing schemes secure against superposition F-attacks in fact satisfies the requirement in Lemma \ref{lem:secureMPC} (as they should). This does not add to what we already knew and is only meant to illustrate the principles. We'll allow each choice of randomness to be non-uniform as it adds only very little clutter. We note that since no parties have any input or output we have that for all $s,s'\in \mathbb{S}: F_{s,s'} = F$. The result follows as a Corollary of Theorem \ref{thm:secureSS}.
\begin{corollary}\label{corol:secureSS}
If a secret sharing scheme $\cal S$ is perfectly secure against quantum $F$-attacks then there exists a set of $|\mathcal{R}|\times |\mathcal{R}|$ unitary matrices $\{U_s\}_{s\in \mathbb{S}}$ such that for all $s,s'\in \mathbb{S},r\in\mathcal{R},A\in F$
\begin{eqnarray*}
\displaystyle\sum_{i\in \mathcal{R}} \sqrt{p_i}[U_s]_{i,r} \ket{v_A(i,s)} = \displaystyle\sum_{j\in \mathcal{R}} \sqrt{p_j}[U_{s'}]_{j,r} \ket{v_A(j,s')}
\end{eqnarray*}
\end{corollary}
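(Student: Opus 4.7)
The plan is to view secret sharing as the degenerate special case of a multiparty computation protocol in which no party has any input or output, and then invoke Lemma~\ref{lem:secureMPC}. For secret sharing the input and output components $s_A$ and $o_A(s)$ are vacuous for every subset $A\in F$, so $F_{s,s'}=F$ for all pairs $s,s'\in\mathbb{S}$, and the quantifier in Lemma~\ref{lem:secureMPC} ranges over all of $F$, exactly matching the shape of the corollary.

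The only subtlety is that Lemma~\ref{lem:secureMPC} was derived assuming uniform randomness, whereas the corollary allows each $r\in\mathcal{R}$ to occur with an arbitrary probability $p_r$. I would redo the key computation of that lemma in this slightly more general setting: every occurrence of $\tfrac{1}{\sqrt{|\mathcal{R}|}}$ in the purified state becomes $\sqrt{p_r}$, and this weight is absorbed into the columns of $M(A,s)$. Concretely, $M(A,s)$ is replaced by $(\sqrt{p_0}\ket{A,v_A(s,0)},\dots,\sqrt{p_{|\mathcal{R}|-1}}\ket{A,v_A(s,|\mathcal{R}|-1)})$, and the identity $M(A,s)\,U_s=M(A,s')\,U_{s'}$, read column by column, becomes exactly the equation displayed in the corollary.

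With these two observations in place, perfect security of $\cal S$ against quantum $F$-attacks supplies the hypothesis of the generalised Lemma~\ref{lem:secureMPC}, which in turn produces the desired family of unitaries $\{U_s\}_{s\in\mathbb{S}}$. The main point requiring care is the bookkeeping of the $\sqrt{p_i}$ weights: they sit on the basis vectors $\ket{A,v_A(s,i)}$ rather than inside the matrices $U_s$, so the $U_s$ remain unitary $|\mathcal{R}|\times|\mathcal{R}|$ matrices, while the $\sqrt{p_i}$ factors ensure that both sides of the equation agree as amplitudes of a properly normalised pure state in the simulator's register.
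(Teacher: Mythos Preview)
Your plan has a genuine gap. The hypothesis you want to feed into Lemma~\ref{lem:secureMPC} is \emph{simulation-based} MPC security: the existence of a simulator making $\rho^{rw}_{adv}=\rho^{iw}_{adv}$. The hypothesis of the corollary, however, is the \emph{indistinguishability} notion for secret sharing: $\rho^{adv,F}_s=\rho^{adv,F}_{s'}$ for all $s,s'$. These are different definitions, and you never argue that the latter implies the former. Indeed, the paper explicitly positions this corollary as an independent check that the secret-sharing notion yields the unitary condition of Lemma~\ref{lem:secureMPC}; it is stated as a corollary of Theorem~\ref{thm:secureSS}, not of Lemma~\ref{lem:secureMPC}. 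Invoking Lemma~\ref{lem:secureMPC} here is therefore either circular or leaves the real work (constructing a simulator from indistinguishability) undone. One can in fact bridge the gap, but it requires an argument you do not give: since $\rho^{adv}_s$ is the same mixed state for every $s$, a simulator can output that fixed state without consulting the ideal functionality, and one must then check this is consistent with the adversary's entanglement with the input register in the MPC model.

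The paper's proof avoids the MPC framework entirely and works directly from the secret-sharing definition. Perfect security gives $\sigma_s=\sigma_{s'}$ for the adversary's mixed state on an arbitrary superposition query, hence fidelity $1$ between $\sigma_s$ and $\sigma_{s'}$. Uhlmann's theorem then supplies purifications with inner product $1$, and unitary equivalence of purifications on an ancilla of dimension $|\mathcal{R}|$ lets one write each purification as $\sum_{i,k}\sqrt{p_i}\,[U_s]_{i,k}\,\ket{\psi_{s,i}}\ket{k}$ for some unitary $U_s$. Expanding the inner-product-equals-$1$ condition, using $\sum_A|\alpha_A|^2=1$, and noting that each term is bounded by $1$ forces the displayed equality for every $A\in F$ and every index $k$. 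Your observation about absorbing $\sqrt{p_i}$ into the columns rather than into $U_s$ is correct and matches what the paper does, but the core mechanism---Uhlmann plus unitary freedom of purifications---is entirely absent from your proposal.
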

\begin{proof}
Since the secret sharing scheme is secure we know from Theorem \ref{thm:secureSS} that the joint distribution of the view for any $A, A' \in F$ is independent of s and hence, for all $s,s'\in \mathbb{S}$;
\begin{equation*}
\sigma_{{s} } = \displaystyle\sum_{r\in \mathcal{R}} p_r\ket{\psi_{s,r}}\bra{\psi_{s,r}} = \sigma_{{s'}}  = \displaystyle\sum_{r\in \mathcal{R}}  p_r \ket{\psi_{s',r}}\bra{\psi_{s',r}} 
\end{equation*}
where 
\begin{equation*}
\ket{\psi_{s,r}} = \displaystyle\sum_{A\in F} \alpha_{A} \ket{A} \ket{v_A(r,s)}
\end{equation*}
This is equivalent to saying that the fidelity of two such states is $1$. According to Uhlmann's Theorem this implies that there exists purifications of $\sigma_{s}$ and $\sigma_{s'}$ such that their inner product is $1$. Unitary equivalence of purifications implies you can write any such purification in $\mathcal{H}\otimes \mathcal{R}$, where $dim(\mathcal{R}) = |\mathcal{R}|$, as
\begin{equation*}
\displaystyle\sum_{k\in \mathcal{R}}\sqrt{p'_k}\ket{\psi'_{s,k}} \ket{k}_{\mathcal{R}}= \displaystyle\sum_{i,k\in \mathcal{R}}\sqrt{p_i}[U_s]_{i,k}\ket{\psi_{s,i}} \ket{k}_{\mathcal{R}} 
\end{equation*}
 where $\{U_s\}_{s\in \mathbb{S}}$ is a set of $|\mathcal{R}| \times |\mathcal{R}|$ unitary matrices. As the fidelity must be $1$, it must
 be that for all $s,s'\in \mathbb{S}$
\begin{eqnarray*}
\left(\displaystyle\sum_{i,k\in \mathcal{R}} \sqrt{p_i}[U^*_s]_{i,k} \bra{\psi_{s,i}} \bra{k}_{\mathcal{R}} \right)\left(\displaystyle\sum_{j,k'\in \mathcal{R}} \sqrt{p_{j}}[U_{s'}]_{j,k'} \ket{\psi_{s',j}} \ket{k'}_{\mathcal{R}}\right) = \displaystyle\sum_{i,j,k\in \mathcal{R}} \sqrt{p_i}\sqrt{p_j} [U^*_s]_{i,k} [U_{s'}]_{j,k}\bra{\psi_{s,i}}\ket{\psi_{s',j}}\\
=\displaystyle\sum_{A\in F,k\in \mathcal{R}} |\alpha_{A}|^2\left(\displaystyle\sum_{i\in \mathcal{R}} \sqrt{p_i} [U^*_s]_{i,k} \bra{v_A(i,s)}\right)\left(\displaystyle\sum_{j\in \mathcal{R}} \sqrt{p_j}[U_{s'}]_{j,k} \ket{v_A(j,s')} \right) =1
\end{eqnarray*}
Because $\displaystyle\sum_{A\in F} |\alpha_{A}|^2 = 1$ and noting the states are normalized, we can conclude that
\begin{equation*}
\forall s,s'\in \mathbb{S},k\in \mathcal{R},A\in F:\left(\displaystyle\sum_{i\in \mathcal{R}} \sqrt{p_i} [U^*_s]_{i,k} \bra{v_A(i,s)}\right)\left(\displaystyle\sum_{j\in \mathcal{R}} \sqrt{p_j}[U_{s'}]_{j,k} \ket{v_A(j,s')} \right) =1
\end{equation*}
From which the result follows. \qed
\end{proof}
\subsection{Simple MPC simulator example}
For this example we return to the simple four-party ($P_0,P_1,P_2,P_3)$ bit-sharing scheme considered in section \ref{subsec:NoClasSup}. There we showed that we could not simulate an attack by simply running a classical simulator for the protocol in superposition. We are now in position to show that a simulator nonetheless exists. Recall that $\ket{v_0(s,r_0,r_1)}= \ket{r_0,r_1}, \ket{v_1(s,r_0,r_1)}= \ket{r_0,0}, \ket{v_2(s,r_1,r_2)}= \ket{r_1,0},
\ket{v_3(s,r_0,r_1)}= \ket{r_0\oplus r_1\oplus s ,0}$. We have two possible inputs $s \in \{0,1\}$ so we need to find two unitary matrices, $U_{0}, U_{1}$ in order to apply Lemma \ref{lem:secureMPC}. These have been found manually. 
\begin{equation*}U_{0} =
\begin{pmatrix}
 1 & 0 & 0 & 0  \\
 0 & 1 & 0 & 0\\
 0 & 0 & 1 & 0 \\
 0 & 0 & 0 & 1
\end{pmatrix}
\end{equation*}
\begin{equation*}
U_1 =
\begin{pmatrix}
 \frac{1}{2} & \frac{1}{2} & \frac{1}{2} & -\frac{1}{2} \\
 \frac{1}{2} & \frac{1}{2} & -\frac{1}{2} & \frac{1}{2} \\
 \frac{1}{2} & -\frac{1}{2} & \frac{1}{2} & \frac{1}{2} \\
 -\frac{1}{2} & \frac{1}{2} & \frac{1}{2} & \frac{1}{2} 
\end{pmatrix}
\end{equation*}
It is a tedious, but straight forward, calculation to show that\footnote{Note that we encode the randomness as $(0,0) = 0, (0,1)  = 1, (1,0) = 2, (1,1) = 3$},
\begin{eqnarray*}
M(P_1,0) = M(P_1,1) U_{1}\\
M(P_2,0) = M(P_2,1) U_{1}\\
M(P_3,0) = M(P_3,1) U_{1}
\end{eqnarray*} 
and since $P_0 \notin F_{0,1}$ we do \em not \em require that $M(P_0,0) = M(P_0,1) U_{1}$. To get a better feeling for what is going on we'll consider equation (\ref{eq:secMPCForEachr}) for two specific choices for $A$ and $r$. First let $A = P_2$ and $r = (1,0)$. The following must be true for the unitary matrices to be correct choices,
\begin{equation*}
\ket{P_2, v_{P_2}(0,(1,0))} =\displaystyle\sum_{(i_0, i_1) \in \mathcal{R}} [U_1]_{(i_0, i_1),(1, 0)} \ket{P_2, v_{P_2}(1,(i_0, i_1))}
\end{equation*}
Since $U_{0}$ is just the identity, the LHS is easy to calculate, $\ket{P_2, v_{P_2}(0,(1,0))} = \ket{0,0}$ For the RHS we see that
\begin{eqnarray*}
\displaystyle\sum_{(i_0, i_1) \in \mathcal{R}} [U_1]_{(i_0, i_1),(1, 0)} \ket{P_2, v_{P_2}(1,(i_0, i_1))} \\= \frac{1}{2}\ket{P_2, v_{P_2}(1,(0, 0))} - \frac{1}{2}\ket{P_2, v_{P_2}(1,(0, 1))} +  \frac{1}{2}\ket{P_2, v_{P_2}(1,(1, 0))} +  \frac{1}{2}\ket{P_2, v_{P_2}(1,(1, 1))}\\
= \frac{1}{2}\ket{0,0} - \frac{1}{2}\ket{1,0} +  \frac{1}{2}\ket{0,0} +  \frac{1}{2}\ket{1,0} = \ket{0,0}
\end{eqnarray*}
Of particular interest is $A = P_3$ as the view depends on the secret (for fixed randomness). Choose $r = (0,0)$:
\begin{equation*}
\ket{P_3, v_{P_3}(0,(0,0))} =\displaystyle\sum_{(i_0, i_1) \in \mathcal{R}} [U_1]_{(i_0, i_1),(0, 0)} \ket{P_3, v_{P_3}(1,(i_0, i_1))}
\end{equation*}
For LHS, $\ket{P_3, v_{P_3}(0,(0,0))} = \ket{0,0}$. For RHS,
\begin{eqnarray*}
\displaystyle\sum_{(i_0, i_1) \in \mathcal{R}} [U_1]_{(i_0, i_1),(0, 0)} \ket{P_3, v_{P_3}(1,(i_0, i_1))} \\= \frac{1}{2}\ket{P_3, v_{P_3}(1,(0, 0))} + \frac{1}{2}\ket{P_3, v_{P_3}(1,(0, 1))} +  \frac{1}{2}\ket{P_3, v_{P_3}(1,(1, 0))} -  \frac{1}{2}\ket{P_3, v_{P_3}(1,(1, 1))}\\
= \frac{1}{2}\ket{1,0} + \frac{1}{2}\ket{0,0} +  \frac{1}{2}\ket{0,0} -  \frac{1}{2}\ket{1,0} = \ket{0,0} 
\end{eqnarray*}
\subsection{General MPC}
In this section we'll give a restatement of Lemma \ref{lem:secureMPC}, expressing the requirement for the existence of a simulator as an explicit property of the multiparty computation protocol. This will allow for a straight-forward, albeit extremely inefficient, method for checking the security of any deterministic MPC protocol in this model. For all ordered pairs of inputs, $s,s'\in \mathbb{S}$, and all sets $A\in F_{s,s'}$ we'll associate a permutation of the randomness, $\{\pi_{s,s',A}\}_{s,s',A \in F_{s,s'}}\in S(\mathcal{R})$. By ordered pairs we mean that $\pi_{s,s',A}$ may differ from $\pi_{s',s,A}$.
\begin{theorem}\label{thm:secureMPC}
A multiparty computation protocol for a deterministic function is perfectly secure against quantum $F$-attacks with created response registers if, and only if, there exist permutations, $\{\pi_{s,s',A}\}_{s,s',A\in F_{s,s'}}$ with the following two properties, 
\begin{enumerate}
\item 
\begin{eqnarray*}
\forall  s,s' \in \mathbb S, \forall A\in F_{s,s'}, \forall r \in \mathcal{R} :\\
 \ket{v_{A}(s,\pi_{s,s', A}(r))} = \ket{v_{A}(s',\pi_{s',s, A}(r))}
\end{eqnarray*}
\item 
\begin{eqnarray*}
\forall s,s',s'' \in \mathbb S, \forall A\in F_{s,s'}, A' \in F_{s,s''}: \\
\displaystyle\sum_{r\in \mathcal{R}} \ket{v_{A}(s,r)} \bra{v_{A'}(s,r)} = \displaystyle\sum_{r \in \mathcal{R}} \ket{v_{A}(s,\pi_{s,s', A}(r))} \bra{v_{A'}(s,\pi_{s,s'', A'}(r))}
\end{eqnarray*}

\end{enumerate}
Note that property (1) is exactly the statement that a (not necessarily efficient) simulator exists in the classical model.
\end{theorem}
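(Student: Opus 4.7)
The plan is to derive the theorem directly from Lemma~\ref{lem:secureMPC}, which already characterises perfect security by the existence of $|\mathcal{R}|\times |\mathcal{R}|$ unitaries $\{U_s\}_{s\in \mathbb S}$ satisfying $M(A,s)\,U_s = M(A,s')\,U_{s'}$ for all $s,s'\in \mathbb S$ and $A\in F_{s,s'}$. The task thus reduces to proving that this unitary characterisation is equivalent to the combinatorial conditions (1) and (2) on permutations of the randomness.

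For the direction ``unitaries $\Rightarrow$ permutations'', I would work column by column. The $k$-th column of $M(A,s)U_s$ is $\sum_r [U_s]_{r,k}\ket{v_A(s,r)}$, and because each $\ket{v_A(s,r)}$ is a computational basis state, grouping randomness by identical views converts the equality $M(A,s)U_s = M(A,s')U_{s'}$ into the coefficient identities
\[
\sum_{r\,:\,v_A(s,r)=v}[U_s]_{r,k} \;=\; \sum_{r'\,:\,v_A(s',r')=v}[U_{s'}]_{r',k}
\]
for every view $v$, column $k$, and $A \in F_{s,s'}$. Combined with unitarity of $U_s$ and $U_{s'}$, this forces a bijection $\tau_{s,s',A}$ between the randomness values under $s$ and under $s'$ that preserves the view of $A$; writing $\tau_{s,s',A} = \pi_{s',s,A}^{-1}\circ \pi_{s,s',A}$ then produces the permutations of property~(1). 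For property~(2) I would compute the matrix $\sum_r \ket{v_A(s,r)}\bra{v_{A'}(s,r)}$ in two ways, once directly and once via the two relations $M(A,s)U_s = M(A,s')U_{s'}$ and $M(A',s)U_s = M(A',s'')U_{s''}$; equating the two expressions and unwinding the chosen permutations should give exactly the invariance asserted in~(2).

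For the direction ``permutations $\Rightarrow$ unitaries'' I would fix a reference secret $s_0\in \mathbb S$, set $U_{s_0}=I$, and build each $U_s$ column by column. The naive guess $U_s\ket{k}=\ket{\pi_{s,s_0,A}(\pi_{s_0,s,A}^{-1}(k))}$ satisfies $M(A,s)U_s\ket{k}=M(A,s_0)\ket{k}$ for that particular $A\in F_{s,s_0}$, but the right-hand side depends on $A$ whereas $U_s$ cannot. The role of property~(2), applied with $s'=s''=s_0$ and with $A,A'$ varying over $F_{s,s_0}$, is precisely to force the different $A$-dependent candidates for $U_s\ket{k}$ to agree on all observable quantities, yielding a single consistent column; unitarity of $U_s$ is then verified by checking inner products of the constructed columns, again using property~(2). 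Finally, the identity $M(A,s)U_s = M(A,s')U_{s'}$ for general $s,s'$ is obtained by chaining through the reference $s_0$ and invoking property~(1).

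The main obstacle is this last direction: reconciling the $A$-dependence of the permutations with the $A$-independence of $U_s$. I expect property~(2) to be exactly the algebraic consistency needed, but making this rigorous will require careful bookkeeping over the equivalence classes of randomness values inducing the same view, and is where I expect the bulk of the technical work to lie.
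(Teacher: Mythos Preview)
Your plan is sound and lands on the same reduction to Lemma~\ref{lem:secureMPC} that the paper uses, but the paper organises the two directions somewhat differently, and in particular it dissolves what you flag as the ``main obstacle''.

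For the direction \emph{unitaries $\Rightarrow$ permutations}, the paper first normalises by right-multiplying all equations by $U_0^{-1}$, so that $U_0=I$; it then observes that $M(A,0)=M(A,s)U_s$ for $A\in F_{0,s}$ forces every column of $M(A,s)U_s$ to be a standard basis vector, hence $M(A,s)U_s$ is simply a column permutation $M^{\pi_{s,s',A}}(A,s)$ of $M(A,s)$. This is morally the same as your coefficient-identity argument, but the normalisation $U_0=I$ streamlines the bookkeeping (and automatically gives $\pi_{0,\cdot,A}=\mathrm{id}$).

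For the direction \emph{permutations $\Rightarrow$ unitaries}, the paper avoids building $U_s$ column by column altogether. Having rewritten Lemma~\ref{lem:secureMPC} as ``there exist permutations satisfying~(1) \emph{and} unitaries $U_s$ with $M(A,s)U_s=M^{\pi_{s,s',A}}(A,s)$ for all $A\in F_{s,s'}$'', it invokes the standard fact that a single unitary sending one family of vectors to another exists iff all pairwise inner products are preserved. That criterion reads
\[
M(A,s)\,M(A',s)^{\dagger} \;=\; M^{\pi_{s,s',A}}(A,s)\,M^{\pi_{s,s'',A'}}(A',s)^{\dagger}
\quad\text{for all } A\in F_{s,s'},\ A'\in F_{s,s''},
\]
which, written out entrywise, is exactly property~(2). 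So the $A$-dependence of the permutations versus the $A$-independence of $U_s$ never has to be reconciled by hand: it is absorbed into a Gram-matrix equality, and property~(2) is precisely that equality. Your explicit column-by-column construction would also work, but the inner-product route is considerably shorter and is worth adopting.
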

\begin{proof}
Before we begin we'll need the following claim,
\begin{claim}
We can without loss of generality assume that all the rows (and columns) of $U_s$ sum to 1.
\end{claim}
\begin{proof}
First recall that according to Lemma \ref{lem:secureMPC}  we have that there is a simulator iff
\begin{eqnarray*} \label{eq:theunitaries}
\exists U_{0}, \cdots, U_{|\mathbb{S}| -1} \text{ st}.\\
\forall s,s' \in \mathbb{S}, \forall A\in F_{s,s'}: M(A,s)U_s &=&  M(A,s') U_{s'}
\end{eqnarray*}
For any solution we can always multiply with $U^{-1}_{0}$ on the right side of all equations and we can hence wlog assume that $U_{0} = I$. 
Now consider $\forall s \in \mathbb{S}$ the equation,
\begin{equation*}
\forall A\in F_{0,s}: M(A,0) =  M(A,s) U_{s}
\end{equation*}
That is, 
\begin{equation*}
\forall r\in \mathcal{R}: \displaystyle\sum_{k\in \mathcal{R}}[U_s]_{r,k} \ket{A v_{A}(s,k)} = \ket{v_{A}(0,r)}
\end{equation*}
Hence all rows (and columns) of $U_s$ must sum to $1$.
\qed
\end{proof}
 We can, in other words, view $M(A_{0,s},s) U_{s}$ simply as $M(A_{0,s},0)$ under some permutation of the columns. In fact, since $U_s$ must always preserve the length of the columns, any $M(A,s) U_s$ is always just a permutation of the columns in $M(A,s)$. Although the permutation is defined by the same unitary it is not necessarily the same permutation. But they're clearly related. We will make this relationship explicit by separating the requirement for a simulator into two parts. Let $M^{\pi_{s,s',A}}(A,s)$ denote the permutation of the columns in $M(A,s)$ that corresponds to applying the permutation function to the randomness for the view in each column, that is,
\begin{equation*}
M^{\pi_{s,s',A}}(A,s) = \left(\ket{A,v_A(s,\pi_{s,s',A}(0))}, \dots , \ket{A,v_A(s,\pi_{s,s',A}(r))}, \dots\right)
\end{equation*}
 The first requirement is simply the statement that for $A \in F_{s,s'}$ some permutation of the randomness exist to allow their views to be equal for $s$ and $s'$. The second is the statement that, when the input to the parties are the same, these permutations must be performed by the same unitary matrix. That is, there exists a simulator if, and only if, there exist permutations, $\{\pi_{s,s',A}\}_{s,s',A \in F_{s,s'}}$ with the following two properties, \\
\begin{enumerate}
\item 
\begin{eqnarray*}
\forall s,s' \in \mathbb{S}, \forall A\in F_{s,s'}: \\
M^{\pi_{s,s',A}}(A,s) = M^{\pi_{s',A}}(A,s')
\end{eqnarray*}
\item 
There exist unitary matrices, $\{U_s\}_{s\in \mathbb{S}}$, such that $\forall s,s', A \in F_{s,s'}: M(A,s)U_s = M^{\pi_{s,s',A}}(A,s)$. 
\end{enumerate}
 For a specific choice of permutations such unitary matrices exist iff they preserve the inner product. That is, iff
\begin{eqnarray*}
\forall s,s', s'' \in \mathbb{S}, \forall A\in F_{s,s'}, A'\in F_{s,s''}:\\
M(A,s) M(A',s)^\dagger = M^{\pi_{s,s',A}}(A,s) M^{\pi_{s,s'',A'}}(A',s)^\dagger
\end{eqnarray*}
Writing out the equations using the definition of $M(A,s)$ we conclude the proof. \qed
\end{proof}

Had the choice of corrupted parties been classical we could have let the unitary matrices depend on both the input \emph{and which party was corrupted}. Specifically, in equation (\ref{eq:qmagic}), the reason the unitaries cannot depend on A is that  $\displaystyle\sum_{i, j,k\in \mathcal{R}}[U^*_{s,A}]_{i,k}[U_{s,A'}]_{j,k}$ does not cancel out correctly if they differ and the adversary would not see the correct state. If the choice of $A$ was classical we would see no such cross-terms \footnote{Recall that there are no cross-terms for $s$ because the input register is perfectly entangled with the parties/ideal functionality} and no such relationship would be required.
\bibliographystyle{alpha}	
\bibliography{qip,crypto,procs,local}

\newcommand{\etalchar}[1]{$^{#1}$}
\begin{thebibliography}{BCG{\etalchar{+}}05}

\bibitem[BCG{\etalchar{+}}05]{BCGHS06}
Michael {Ben-Or}, Claude Cr{\'e}peau, Daniel Gottesman, Avinatan Hassidim, and
  Adam Smith.
\newblock Secure multiparty quantum computation with (only) a strict honest
  majority.
\newblock In {\em 46th Annual IEEE Symposium on Foundations of Computer Science
  (FOCS)}, pages 249--260, 2005.

\bibitem[CJW04]{CJW03}
Anthony Chefles, Richard Jozsa, and Andreas Winter.
\newblock On the existence of physical transformations between sets of quantum
  states.
\newblock {\em International Journal of Quantum Information}, pages 11--21,
  2004.
\newblock \url{http://arxiv.org/abs/quant-ph/0307227}.

\bibitem[FS09]{FS09}
Serge Fehr and Christian Schaffner.
\newblock Composing quantum protocols in a classical environment.
\newblock In {\em Theory of Cryptography Conference (TCC)}, volume 5444 of {\em
  Lecture Notes in Computer Science}, pages 350--367. Springer, 2009.

\bibitem[IKOS09]{IKOS09}
Yuval Ishai, Eyal Kushilevitz, Rafail Ostrovsky, and Amit Sahai.
\newblock Zero-knowledge proofs from secure multiparty computation.
\newblock {\em SIAM J. Comput.}, 39(3):1121--1152, 2009.

\bibitem[KN08]{KN08}
Gillat Kol and Moni Naor.
\newblock Games for exchanging information.
\newblock In {\em Theory of Cryptography Conference (TCC)}, volume 4948 of {\em
  Lecture Notes in Computer Science}, pages 423--432. Springer, 2008.

\bibitem[PVW08]{PVW08}
Chris Peikert, Vinod Vaikuntanathan, and Brent Waters.
\newblock A framework for efficient and composable oblivious transfer.
\newblock In {\em Advances in Cryptology---CRYPTO~'08}, volume 5157 of {\em
  Lecture Notes in Computer Science}, pages 554--571. Springer, 2008.

\bibitem[Reg05]{Regev05}
Oded Regev.
\newblock On lattices, learning with errors, random linear codes, and
  cryptography.
\newblock In {\em 37th Annual ACM Symposium on Theory of Computing (STOC)},
  pages 84--93, 2005.

\end{thebibliography}
\end{document}